\theoremstyle{plain}
\newtheorem{theorem}{Theorem}
\newtheorem{lemma}[theorem]{Lemma}
\newtheorem{proposition}[theorem]{Proposition}
\newcommand{\RR}{\mathbb{R}}
\newcommand{\beqn}{\begin{equation}}
	\newcommand{\eeqn}{\end{equation}}
\newcommand{\bear}{\begin{eqnarray}}
	\newcommand{\eear}{\end{eqnarray}}
\newcommand{\bean}{\begin{eqnarray*}}
	\newcommand{\eean}{\end{eqnarray*}}
\newcommand{\cE}{\omega}
\newcommand{\p}{{\bf{p}}}
\theoremstyle{remark}
\title[wave kinetic equation in oceanography]{On a wave kinetic equation with resonance broadening in  oceanography and atmospheric sciences}
\author[Y. H. Kim]{Young Ho Kim
}
\address{Department of Mathematics, Texas A\&M University, College Station, TX 77843, USA. }
\email{yhkim@tamu.edu} 
\author[Y. V. Lvov]{Yuri V. Lvov
}
\address{Department of Mathematical Sciences, 
	Rensselaer Polytechnic Institute, Troy, NY 12180,  USA. }
\email{ lvovy@rpi.edu} 
\author[L. M. Smith]{Leslie M. Smith
}
\address{Department of Mathematics and Department of Engineering Physics, University of Wisconsin-Madison, Madison, WI 53706, USA. }
\email{ lsmith@math.wisc.edu} 
\author[M.-B. Tran]{Minh-Binh Tran}
\address{Department of Mathematics, Texas A\&M University, College Station, TX 77843, USA}
\email{minhbinh@tamu.edu} 
\thanks{Y.H. K. and M.-B. T are  funded in part by  a   Humboldt Fellowship,   NSF CAREER  DMS-2303146, and NSF Grants DMS-2204795, DMS-2305523,  DMS-2306379. }
\begin{document}

\maketitle
\begin{abstract} 

 In this work, we study a three-wave kinetic equation with resonance broadening arising from the theory of stratified ocean flows. Unlike \cite{GambaSmithBinh}, we employ a different formulation of the resonance broadening, which makes the present model more suitable for ocean applications. We establish the global existence and uniqueness of strong solutions to the new resonance broadening kinetic equation.
 \end{abstract}

{\bf Keywords:} wave (weak) turbulence theory, wave-wave interactions, stratified fluids, oceanography, near-resonance\\

{\bf MSC:} {35B05, 35B60, 82C40}

 \tableofcontents

\section{Introduction}

During the last few decades, wave-wave interactions in continuously stratified fluids have been an important subject of intensive research in oceanography and atmospheric sciences. One of the most important discoveries  in understanding  such wave-wave interactions is the observation of a nearly universal internal-wave energy spectrum in the ocean, first described by Garrett and Munk (cf. \cite{garrett1975space,garrett1979internal,cairns1976internal}). The existence of such a universal spectrum is the result of nonlinear interactions of waves with different wavenumbers,  interacting in triads (cf. \cite{waleffe1992nature}). Moreover, 
resonant triads are expected to dominate the dynamics for weak nonlinearity  (cf. \cite{mccomasbretherton77}).

Resonant wave interactions can be described by Zakharov kinetic equations (cf. \cite{zakharov2012kolmogorov,Nazarenko:2011:WT,majda1997one,cai1999spectral,zakharov1967weak,zakharov1968stability}), which reads
\begin{equation}\label{WeakTurbulenceInitiala}
\begin{aligned}
\partial_tf(t,\p) + \mu_\p f(t,\p) \ =& \ \mathbb{C}^{exact}[f](t,\p), \ \ \ f(0,\p)=f_0(\p),
\end{aligned}
\end{equation}
where  $f(t,\p)$ is the nonnegative wave density  at  wavenumber $\p \in \RR^d$, $d \ge 2$. Following \cite{zakharov1967weak}, 
$\mu_\p f=2\nu|\p|^\gamma f$ ($\gamma > 2$) 
is  the viscous damping term, and $\nu$ is the viscosity coefficient. The equation is a  three-wave kinetic one, in which the collision operator is of the form 
\begin{equation}\label{def-Qfa}\mathbb{C}^{exact}[f](\p) \ = \ \iint_{\mathbb{R}^{2d}} \Big[ \mathcal{N}^{exact}_{\p,\p_1,\p_2}[f] - \mathcal{N}^{exact}_{\p_1,\p,\p_2}[f] - \mathcal{N}^{exact}_{\p_2,\p,\p_1}[f] \Big] d\p_1d\p_2 \end{equation}
with $$\begin{aligned}
\mathcal{N}^{exact}_{\p,\p_1,\p_2} [f]:=   |\bar V_{\p,\p_1,\p_2}|^2\delta(\p-\p_1-\p_2)\delta(\omega_\p -\omega_{\p_1}-\omega_{\p_2})(f_1f_2-ff_1-ff_2),
\end{aligned}
$$
and  we use the short-hand notation $f = f(t,\p)$ and $f_j = f(t,\p_j)$. The  collision kernel $V_{\p,\p_1,\p_2}$ is of the form (cf. \cite{lvov2004noisy,connaughton2001discreteness,lvov2004hamiltonian,lvov2012resonant,l1997statistical})
\begin{equation}\label{def-VVa}
\begin{aligned}
\bar V_{\p,\p_1,\p_2}  \ = \mathfrak{C}  
\left({|\p||\p_1||\p_2|}\right)^\frac12
\text{,}
\end{aligned}
\end{equation}
where $\mathfrak{C}$ is some physical constant, which is set to be $1$. 

 The equations   describe the 
spectral energy transfer on the resonant manifold, which is a set of wave vectors $\p$, $\p_1$, $\p_2$ satisfying
\begin{equation}
\label{ResonantManifold}
\p=\p_1+\p_2,\ \ \ \ \ \ \omega_\p=\omega_{\p_1}+\omega_{\p_2},
\end{equation}
where the frequency $\omega$ is given by 
the dispersion relation between the wave frequency $\omega$ and the  wavenumber $\p$ \begin{equation}\label{Def:DispersionLaw}
\omega_\p=\sqrt{F^2+\frac{g^2}{\rho_0^2N^2}\frac{|\p|^2}{m^2}},
\end{equation}
where $F$  is the Coriolis parameter, $N$ is the buoyancy frequency, $m$ is the reference vertical wave number determined from observations, $g$ is the gravitational constant, $\rho_0$ is the constant reference value for the density. Let us set $\Lambda_1=F^2$ and  $\Lambda_2=g^2/(m^2\rho_0^2N^2)$, such that
\begin{equation}\label{Def:DispersionLaw2}
\omega_\p=\sqrt{\Lambda_1+\Lambda_2{|\p|}^2}.
\end{equation}

However, it is known that exact resonances defined by $\omega_\p=\omega_{\p_1}+\omega_{\p_2}$ do not capture some important physical effects, some authors have included more physics by  allowing near-resonant interactions (cf. 
\cite{connaughton2001discreteness,lee2007formation,lvov2012resonant,l1997statistical,lvov2004hamiltonian,lvov2004noisy,lvov2010oceanic,newell69,smith2005near,remmel2009new,remmel2010nonlinear}), 
defined as
\begin{equation}
\label{NearResonantManifold}
\p=\p_1+\p_2,\ \ \ \ |\omega_\p-\omega_{\p_1}-\omega_{\p_2}|<\theta(f,\p), 
\end{equation}
where $\theta$ accounts for broadening of the resonant surfaces and  is a function of  the wave density $f$ and the wave number $\p$ (cf. \cite{chekhlov96,huang2000,lee2007formation,remmel2014nonlinear,smith2001,smith2005near,smith1999transfer,smith2002generation}).

In the previous work \cite{GambaSmithBinh}, we considered the following near-resonance turbulence kinetic equation 
\cite{connaughton2001discreteness,l1997statistical,lvov2004hamiltonian,lvov2004noisy,lvov2012resonant}), 

\begin{equation}\label{def-Qfb}\mathbb{C}^{Broaden}[f](\p) \ = \ \iint_{\mathbb{R}^{2d}} \Big[ \mathcal{N}^{Broaden}_{\p,\p_1,\p_2}[f] - \mathcal{N}^{Broaden}_{\p_1,\p,\p_2}[f] - \mathcal{N}^{Broaden}_{\p_2,\p,\p_1}[f] \Big] d\p_1d\p_2 \end{equation}
with $$\begin{aligned}
\mathcal{N}^{Broaden}_{\p,\p_1,\p_2} [f]:=   |\bar V_{\p,\p_1,\p_2}|^2\delta(\p-\p_1-\p_2)\mathcal{L}^{Broaden}_f(\omega_\p -\omega_{\p_1}-\omega_{\p_2})(f_1f_2-ff_1-ff_2),
\end{aligned}
$$
and the operator $\mathcal{L}_f^{Broaden}$ is the Laurentian
\begin{equation}
\label{OperatorLb}
\mathcal{L}_f^{Broaden}(\Delta)=\frac{\bar\Gamma^f_{\p,\p_1,\p_2}}{\Delta^2+(\bar\Gamma^f_{\p,\p_1,\p_2})^2},
\end{equation}
with the condition that
$$\lim_{\Gamma^f_{\p,\p_1,\p_2}\to 0}\mathcal{L}^{Broaden}_f(\Delta)=\pi\delta(\Delta).$$
Moreover, the resonance broadening frequency $\Gamma^f_{\p,\p_1,\p_2}$ may be written
\begin{equation}
\label{Gamma}
\bar\Gamma^f_{\p,\p_1,\p_2}=\gamma_\p+\gamma_{\p_1}+\gamma_{\p_2},
\end{equation}
where $\gamma_\p$ is computed in \cite{l1997statistical} using a one-loop approximation:
\begin{equation}\label{GST}
\gamma_\p\backsim \mathfrak{c}|\p|^2\int_{\mathbb{R}_+}|\p|^2|f(t,|\p|)|d|\p|,\end{equation}
and $\mathfrak{c}$ is a physical constant, which can be normalized to be $1$.

However, the approximation \eqref{GST} is designed mainly for the acoustic dispersion relation 
$\omega(|\p|)=|\p|$, and thus is it serves mainly as a proof-of-concept.

A different approximation was proposed in \cite{polzin2017oceanic}, where $\gamma_\p$ is computed as \begin{equation}\label{Gammanew} \gamma_\p \backsim \mathfrak{c}_1 \max\{ \omega(|\p|) f(t,|\p|), \mathfrak{c}_2 \}, \end{equation} for some physical constants $\mathfrak{c}_1, \mathfrak{c}_2 > 0$.  
The approximation \eqref{Gammanew} is based on a class of three-wave interactions associated with induced diffusion in the ocean, where two wavenumbers are much larger in magnitude than the third wavenumber \cite{mccomasbretherton77}.
Using \eqref{Gammanew} in place of \eqref{GST} is expected to be a better approximation to describe some of the energy transfer influencing small-scale processes in the ocean interior, since \eqref{GST} is designed mainly for acoustic waves.

Following \cite{polzin2017oceanic}, we here use  
\eqref{Gammanew} in place of \eqref{GST} for $\gamma_\p$, and we consider the re-formulated kinetic equation

\begin{equation}\label{WeakTurbulenceInitial}
\begin{aligned}
\partial_tf(t,\p) + \mu_\p f(t,\p) \ =& \ \mathbb{C}[f](t,\p), \ \ \ f(0,\p)=f_0(\p),
\end{aligned}
\end{equation}
\begin{equation}\label{def-Qf}\mathbb{C}[f](\p) \ = \ \iint_{\mathbb{R}^{2d}} \Big[ \mathcal{N}_{\p,\p_1,\p_2}[f] - \mathcal{N}_{\p_1,\p,\p_2}[f] - \mathcal{N}_{\p_2,\p,\p_1}[f] \Big] d\p_1d\p_2 \end{equation}
with $$\begin{aligned}
\mathcal{N}_{\p,\p_1,\p_2} [f]:=   |V_{\p,\p_1,\p_2}|^2\delta(\p-\p_1-\p_2)\mathcal{L}_f(\omega_\p -\omega_{\p_1}-\omega_{\p_2})(f_1f_2-ff_1-ff_2),
\end{aligned}
$$
and the operator $\mathcal{L}_f$ is of the  form 
\begin{equation}
\label{OperatorL}
\mathcal{L}_f(\Delta)=  \frac{\Gamma^f_{\p,\p_1,\p_2}}{\Delta^2+(\Gamma^f_{\p,\p_1,\p_2})^2},  
\end{equation}
 Note that the formulation of $\Gamma^f_{k,k_1,k_2}$ is given
\begin{equation}
\Gamma^f_{\p,\p_1,\p_2}=\mathfrak{c}_1\max\{\omega(|\p|) f(t,|\p|),\mathfrak c_2\} + \mathfrak{c}_1\max\{\omega(|\p_1|)f(t,|\p_1|),\mathfrak c_2\} +\mathfrak{c}_1\max\{\omega(|\p_2|)f(t,|\p_2|),\mathfrak c_2\}.
\end{equation}
The kernel \eqref{def-VVa} is replaced by
\begin{equation}\label{def-VV}
	\begin{aligned}
		 V_{\p,\p_1,\p_2}  \ = \mathfrak{C}  
		\left( |\p|+|\p_1|+|\p_2| \right),
	\end{aligned}
\end{equation}
following \cite{polzin2017oceanic}.

{\it It is our goal to construct, for the first time, global unique  solutions in $L^1_m(\mathbb{R}^d)$ to \eqref{WeakTurbulenceInitial}.}

Let us mention that  the analysis of 3-wave kinetic equations  has been studied extensively across numerous physical contexts. Applications include Bose-Einstein condensates~\cite{cortes2020system, EPV, escobedo2023linearized1, escobedo2023linearized, escobedo2025local, ToanBinh,nguyen2017quantum,PomeauBinh,SofferBinh1,staffilani2025formation}, phonon interactions in crystal lattices~\cite{AlonsoGambaBinh, CraciunBinh, EscobedoBinh, GambaSmithBinh, tran2020reaction}, stratified ocean flows~\cite{GambaSmithBinh},  capillary waves~\cite{das2024numerical, nguyen2017quantum, soffer2020energy, walton2022deep, walton2023numerical, walton2024numerical}, and beam waves ~\cite{rumpf2021wave}.

We  split $\mathbb{C}$ as the sum of a gain and a loss operators:
\begin{equation}\label{GainLoss}
\mathbb{C}[f] \ = \ \mathbb{C}_{\mathrm{gain}}[f] \ - \ \mathbb{C}_\mathrm{loss}[f],
\end{equation}
as is done with the classical Boltzmann operator  for binary elastic interactions. Here, the gain operator is also defined by the positive contributions in the total rate of change in time of the collisional form $\mathbb{C}[f](t,\p)$ 
\begin{equation}\label{Qgain}
\begin{aligned}
\mathbb{C}_\mathrm{gain}[f] \ = & \ \iint_{\mathbb{R}^{d}\times \mathbb{R}^{d}}|V_{\p,\p_1,\p_2}|^2\delta(\p-\p_1-\p_2)\mathcal{L}_f(\omega_\p-\omega_{\p_1}-\omega_{\p_2})f_1f_2d\p_1d\p_2 \\
\ & +2\iint_{\mathbb{R}^{d}\times \mathbb{R}^{d}} |V_{\p_1,\p,\p_2}|^2\delta(\p_1-\p-\p_2)\mathcal{L}_f(\omega_{\p_1}-\omega_{\p}-\omega_{\p_2})(ff_1+f_1f_2)d\p_1d\p_2.
\end{aligned}
\end{equation}
and the loss operator models the negative contributions in the total
rate of change in time of the same collisional form $\mathbb{C}[f](t,\p)$
\begin{equation}\label{Qlosss}\mathbb{C}_\mathrm{loss}[f] \ = \ f\vartheta[f],\end{equation}

with $\vartheta[f]$ being the collision frequency or attenuation coefficient, defined by
\begin{equation}\label{Qloss}
\begin{aligned}
\vartheta[f](\p) \ = & \ 2\iint_{\mathbb{R}^{d}\times \mathbb{R}^{d}}|V_{\p,\p_1,\p_2}|^2\delta(\p-\p_1-\p_2)\mathcal{L}_f(\omega_\p-\omega_{\p_1}-\omega_{\p_2})f_1d\p_1d\p_2 \\
\ & +2\iint_{\mathbb{R}^{d}\times \mathbb{R}^{d}} |V_{\p_1,\p,\p_2}|^2\delta(\p_1-\p-\p_2)\mathcal{L}_f(\omega_{\p_1}-\omega_{\p}-\omega_{\p_2})f_2d\p_1d\p_2.
\end{aligned}
\end{equation}

For $m>0$, let $L^1_m(\RR^d)$ be the function space consisting of $g(\p)$ so that the norm 
$$ \| g\|_{L^1_m} : = \int_{\RR^d} |g(\p)| \cE_\p^m \; d\p $$
is finite.

For a given function $g$, we also define the $m$-th moment by
\begin{equation}\label{Def:MomentOrderk}
\mathcal{M}_m[g]=\int_{\mathbb{R}^d} g(\p)\cE^m_\p \,d\p.
\end{equation} 
Notice that when $g$ is positive $\mathcal{M}_n[g]$ and $\|g\|_{L^1_n}$ are equivalent.

We shall construct  global unique  solutions in $L^1_m(\mathbb{R}^d)$ to \eqref{WeakTurbulenceInitial}, or equivalently
\begin{equation}\label{WeakTurbulenceInitialReformed}
\partial_t f(t,\p) \ = \ \mathbb{C}_\mathrm{gain}[f](t,\p) \ - \ f(t,\p)\vartheta[f](t,\p) \ - 2\nu|\p|^\gamma f, \ \ \ f(0,\p)=f_0(\p).
\end{equation} 

Let us  define
$$\theta_* : = \widetilde{C}(\Lambda_1, \Lambda_2, \gamma, \nu)$$
where 
$\widetilde{C}>0$ is a constant depending on $\Lambda_1,\Lambda_2,\gamma,\nu$ 
to be defined later in Proposition \ref{Propo:MomentsPropa}. 
For any $\varsigma>1$ and $m,t >0$, 
we introduce $\Omega_t$ which includes functions 
$f \in L^{1}_{m+3}\big(\mathbb{R}^{d}\big) $ that satisfy 
\begin{equation}\label{Cons}
\begin{aligned}
&\mbox{(S1) Positivity of the set } \Omega_t: \ \ \ f\ge 0;\\
&\mbox{(S2) Upper bound of the set } \Omega_t: \ \ \   \| f\|_{L^1_{m+3}} \le c_0(t)
:=(2\varsigma+1)e^{\theta_* t}\text{.}
\end{aligned}
\end{equation}
Since $c_0(t)$ is an increasing function, 
$\Omega_t\subset \Omega_{t'}$ for $0\le t\le t'\le T$ and our main result is as follows. 

\begin{theorem}\label{Theorem:Main}
Let $N>0$, $\gamma > 2$, $T>0$, and let
\[
f_0(\p) \in \Omega_0 \cap B_*(O,\varsigma)
\]
for some $\varsigma > 1$, 
where $B_*(O,\varsigma)$ denotes the ball in $L^1_{m+3}(\mathbb{R}^d)$ centered at $O$ with radii $\varsigma$.

Then the weak turbulence equation \eqref{WeakTurbulenceInitial} admits a unique strong solution $f(t,\p)$ such that
\begin{equation}\label{the_theorem}
	0 \leq f(t,\p) \in C\!\left([0,T); L^1_m(\mathbb{R}^d)\right) 
	\cap C^1\!\left((0,T); L^1_m(\mathbb{R}^d)\right).
\end{equation}

Moreover, $f(t,\p)\in \Omega_T$ for all $t\in [0,T)$. 

Since $T$ can be chosen arbitrarily large, the weak turbulence equation \eqref{WeakTurbulenceInitial} has a unique global solution for all time $t>0$. 
\end{theorem}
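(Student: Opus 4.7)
\smallskip\noindent\textbf{Proof plan.} The structural feature I would exploit is that the $\max$-formulation \eqref{Gammanew} forces $\Gamma^f_{\p,\p_1,\p_2}\ge 3\mathfrak{c}_1\mathfrak{c}_2>0$ pointwise and uniformly in $f$, so the Lorentzian \eqref{OperatorL} obeys the universal bound $\mathcal{L}_f(\Delta)\le 1/(3\mathfrak{c}_1\mathfrak{c}_2)$. Combined with the polynomial kernel $|V_{\p,\p_1,\p_2}|\lesssim |\p|+|\p_1|+|\p_2|$ from \eqref{def-VV}, this turns $\mathbb{C}[f]$ into an ordinary tri-linear integral operator with polynomial weights, which is the main reason \eqref{WeakTurbulenceInitial} is tractable even though its $\delta$-resonant analog \eqref{WeakTurbulenceInitiala} is not. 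My plan therefore consists of four stages: (i) propagate weighted $L^1_{m+3}$-moments to establish the invariant set $\Omega_t$; (ii) solve the linearized (frozen-$g$) problem via Duhamel; (iii) close a Banach fixed point on $\Omega_T$ to obtain local existence, positivity, and uniqueness; (iv) iterate to arbitrary $T>0$ using the global moment bound from stage~(i).

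For stage~(i) I would test \eqref{WeakTurbulenceInitialReformed} against $\omega_\p^{m+3}$, symmetrize the three $\mathcal{N}$-terms in \eqref{def-Qf}, and use $\mathcal{L}_f\le 1/(3\mathfrak{c}_1\mathfrak{c}_2)$ together with the polynomial growth of $|V|^2$ (dominated by $(\omega_{\p_1}+\omega_{\p_2})^2$ under the momentum constraint $\p=\p_1+\p_2$) to estimate the trilinear contribution by a polynomial expression in $\|f\|_{L^1_{m+3}}$. The viscous sink $-2\nu|\p|^\gamma f$ with $\gamma>2$ dissipates a strictly higher moment and, by interpolation, absorbs the supercritical part of the nonlinearity, producing the closed bound $\|f(t)\|_{L^1_{m+3}}\le(2\varsigma+1)e^{\theta_*t}$ recorded in (S2); this is what Proposition~\ref{Propo:MomentsPropa} should deliver. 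For stages~(ii)--(iii), given $g\in\Omega_T$ the Duhamel solution of the linear equation
$$ \partial_t f \ + \ \bigl(\vartheta[g]+2\nu|\p|^\gamma\bigr) f \ = \ \mathbb{C}_\mathrm{gain}[g], \qquad f(0,\p)=f_0(\p), $$
manifestly preserves positivity (since $\vartheta[g],\mathbb{C}_\mathrm{gain}[g],f_0\ge 0$), and the moment estimate shows that the induced map $\mathcal{T}:g\mapsto f$ sends $\Omega_T$ into itself. Uniqueness and local existence reduce to verifying that $\mathcal{T}$ is a contraction on $\Omega_T$ in the $L^1_m$ norm for $T$ small, after which the global moment bound lets one iterate to any prescribed horizon.

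The \textbf{main obstacle} is the Lipschitz control of $g\mapsto \mathcal{L}_g$ (and of $g\mapsto\vartheta[g]$) required to close the contraction. Two subtleties appear: $\Gamma^g$ depends on the \emph{pointwise} values $\omega(|\p_j|)g(t,|\p_j|)$ for $j=0,1,2$ rather than on integral quantities, and it involves the non-smooth cutoff $\max\{\cdot,\mathfrak{c}_2\}$. The non-smoothness is harmless because $\max\{\cdot,\mathfrak{c}_2\}$ is $1$-Lipschitz, so no regularity beyond $L^1_{m+3}$ is ever demanded of $g$. The pointwise dependence is handled by differentiating $\mathcal{L}_f$ in $\Gamma$ on the region $\{\Gamma\ge 3\mathfrak{c}_1\mathfrak{c}_2\}$, which yields a pointwise Lipschitz bound in $|\Gamma^{g_1}-\Gamma^{g_2}|$; the $\p_1,\p_2$-evaluations then sit inside the $d\p_1\,d\p_2$ integration and can be absorbed into $\|g_1-g_2\|_{L^1_m}$ using the extra $\omega_\p^m$-weight from the test function and the polynomial $|V|^2$, while the $j=0$ evaluation is paired with an outer $d\p$-integration. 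Once this bookkeeping is carried out the contraction closes, and the global existence and uniqueness statement follows by standard continuation using the a~priori bound of stage~(i).
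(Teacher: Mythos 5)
The paper's proof does \emph{not} go through Duhamel and a Banach fixed point; it invokes the abstract ODE existence result stated as Theorem~\ref{Theorem:ODE}, verifying its three hypotheses: H\"older continuity $(\mathscr{A})$ via Proposition~\ref{Propo:HolderC12}, the sub-tangent condition $(\mathscr{B})$ via a spatial truncation $f_R=\chi_R f$ together with the lower bound of Proposition~\ref{Propo:MassLowerBound}, and the one-sided Lipschitz condition $(\mathscr{C})$ via Lemma~\ref{Lemma:Holder}. Your plan is therefore a genuinely different route, and the comparison is instructive.

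The core reason the paper avoids a contraction in a fixed moment space is one you do not confront: the Lipschitz estimate for the collision operator loses two moments. Lemma~\ref{Lemma:Holder} gives $\|\mathbb{C}[g]-\mathbb{C}[h]\|_{L^1_m}\lesssim\|g-h\|_{L^1_{m+2}}$, because the factor $|V_{\p,\p_1,\p_2}|^2\sim(|\p|+|\p_1|+|\p_2|)^2$ together with the pointwise dependence of $\Gamma^g$ on $g$ forces the difference $g-h$ to be integrated against $\omega^{m+2}$, not $\omega^m$. Your assertion that the $\p_1,\p_2$-evaluations ``can be absorbed into $\|g_1-g_2\|_{L^1_m}$'' is incorrect: they are absorbed into $\|g_1-g_2\|_{L^1_{m+2}}$. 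Consequently the map $\mathcal{T}\colon g\mapsto f$ you define is not Lipschitz $\Omega_T\to\Omega_T$ with respect to $\|\cdot\|_{L^1_m}$, and the contraction as you state it (``in the $L^1_m$ norm for $T$ small'') does not close. The same moment loss reappears if you move up to $L^1_{m+2}$, $L^1_{m+3}$, etc.: the estimate always loses two moments, so no fixed weighted $L^1$ space self-closes by this reasoning alone.

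The paper sidesteps this in two complementary ways. First, it gets only \emph{H\"older} continuity in $\|\cdot\|_{L^1_m}$: since $\Omega_T$ is a bounded subset of $L^1_{m+3}$, interpolation between $L^1_m$ and $L^1_{m+3}$ converts the bound $\|\mathbb{C}[g]-\mathbb{C}[h]\|_{L^1_m}\lesssim\|g-h\|_{L^1_{m+2}}$ into a bound by $\|g-h\|_{L^1_m}^{\beta}$ for some $\beta\in(0,1)$; this is exactly hypothesis~$(\mathscr{A})$, and is why the extra three moments appear in the definition of $\mathfrak{F}=L^1_{m+3}$. Second, for uniqueness the paper uses the \emph{one-sided} bracket $[\mathbb{Q}[f]-\mathbb{Q}[g],f-g]$, under which the viscous term contributes the negative quantity $-2\nu\int|\p|^\gamma|f-g|\omega_\p^m\,d\p$; because $\gamma>2$ this dominates the $+C\int|f-g|\omega_\p^{m+2}\,d\p$ coming from the collision estimate, producing a genuine one-sided Lipschitz bound in $L^1_m$ even though the two-sided Lipschitz bound fails. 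Your Duhamel contraction has no mechanism that uses the sign of the viscous term in this way.

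A repaired version of your plan does exist: the Duhamel kernel $e^{-2\nu(t-s)|\p|^\gamma}$ satisfies $\omega_\p^2\,e^{-2\nu(t-s)|\p|^\gamma}\lesssim(t-s)^{-2/\gamma}+C$, and $1-2/\gamma>0$ since $\gamma>2$, so the two-moment loss in $\mathbb{C}_{\mathrm{gain}}[g_1]-\mathbb{C}_{\mathrm{gain}}[g_2]$ (and in $\vartheta[g_1]-\vartheta[g_2]$) can be traded for an integrable singularity in time, and a contraction in, say, $C([0,T];L^1_{m+2})$ becomes plausible for $T$ small. But you neither invoke this parabolic smoothing nor work in the correct (higher-moment) space, and without it the argument collapses at the step you label ``the contraction closes.'' You should also double-check the claim that $\mathcal{T}$ maps $\Omega_T$ into itself: Proposition~\ref{Propo:MomentsPropa} is proved for the nonlinear equation, whereas your map freezes $\vartheta[g]$ and $\mathcal{L}_g$, so the moment propagation must be re-derived for the frozen-coefficient linear problem (this is true but is not a direct citation).
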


	The proof of Theorem \ref{Theorem:Main} relies on the following abstract ODE theorem, inspired by previous works in quantum kinetic theory \cite{AlonsoGambaBinh,Bressan}.  
	
	Let $\mathfrak{E} = (\mathfrak{E}, \|\cdot\|)$ be a Banach space of real functions on $\mathbb{R}^d$, and let $(\mathfrak{F}, \|\cdot\|_*)$ be a Banach subspace of $\mathfrak{E}$ satisfying $\|u\| \le  C\|u\|_*\, \forall u \in \mathfrak{F}$ for some positive constant $C$. 
	Denote by $B(O,r)$ and $B_*(O,r)$ the balls centered at $O$ of radius $r>0$ with respect to the norms $\|\cdot\|$ and $\|\cdot\|_*$, respectively.  
	
	Suppose there exists a function $|\cdot|_* \colon \mathfrak{F} \to \mathbb{R}$ such that  
	\[
	|u|_* \le \|u\|_*, \quad \forall u \in \mathfrak{F}, \qquad 
	|u+v|_* \le |u|_* + |v|_*, \quad \forall u,v \in \mathfrak{F},
	\]
	and
	\[
	\Lambda |u|_* = |\Lambda u|_*, \quad \forall u \in \mathfrak{F},\ \Lambda \in \mathbb{R}_+.
	\]

\begin{theorem}\label{Theorem:ODE} 
	Let $[0,T]$ be a time interval, and let $\Omega_t$ $(t\in[0,T])$ be a family of bounded, closed subsets of $\mathfrak{F}$ such that $\Omega_t \subset \Omega_{t'}$ for $0 \le t \le t'$, and each $\Omega_t$ contains only nonnegative functions. Assume further that
	\[
	|u|_* = \|u\|_*, \quad \forall\, u \in \Omega_T.
	\]
	
	Moreover, for any sequence $\{u_n\}$ in $\Omega_T$, 
	\begin{equation}\label{LesbegueDominated}
		\text{if } u_n \geq 0,\ \|u_n\|_* \le C,\ \lim_{n\to\infty}\|u_n - u\| = 0,
		\quad\text{then}\quad 
		\lim_{n\to\infty}\|u_n - u\|_* = 0,
	\end{equation}
	for some constant $C>0$.
	
	Let $\varsigma>1$, and suppose $\mathcal{Q}:\Omega_T \to \mathfrak{E}$ is an operator satisfying the following properties. There exist constants $\eta,\theta_*,L>0$ such that:
	
	\begin{itemize}
		
		\item[$(\mathscr{A})$] \textbf{H\"{o}lder continuity.}  
		\begin{equation*}
			\|Q[u] - Q[v]\| \leq C\|u - v\|^{\beta}, \qquad \beta \in (0,1),\quad \forall\, u,v \in \Omega_T.
		\end{equation*}
		
		\item[$(\mathscr{B})$] \textbf{Sub-tangent condition.}  
		For each $u \in \Omega_T$, there exists $\xi_u>0$ such that for $0<\xi<\xi_u$, one can find $z \in B(u+\xi \mathcal{Q}[u],\delta)\cap \Omega_T \setminus \{u+\xi \mathcal{Q}[u]\}$ (for $\delta$ small enough) such that
		\begin{equation}\label{Coercivity}\begin{aligned}
				|z-u|_* &\le \tfrac{\theta_* \xi}{2}\|u\|_*
				\text{.}
		\end{aligned}\end{equation}
		
		\item[$(\mathscr{C})$] \textbf{One-sided Lipschitz condition.}  
		\begin{equation*}
			[Q[u] - Q[v], u - v] \leq L\|u - v\|, \qquad \forall\, u,v \in \Omega_T,
		\end{equation*}
		where
		\[
		[\varphi,\phi] := \lim_{h \to 0^-} h^{-1}\big(\|\phi + h\varphi\| - \|\phi\|\big).
		\]
		
	\end{itemize}
	
	In addition, 
	assume that $B\!\left(0,(2\varsigma+1)e^{\theta_*T}\right)\subset\Omega_T$.
	
	Then the equation
	\begin{equation}\label{Theorem_ODE_Eq}
		\partial_t u = Q[u] \quad \text{on } [0,T)\times \mathfrak{E}, 
		\qquad u(0) = u_0 \in \Omega_0 \cap B_*(O,\varsigma)
	\end{equation}
	admits a unique solution
	\[
	u \in C^1\big((0,T),\mathfrak{E}\big)\cap C\big([0,T),\Omega_T\big).
	\]
\end{theorem}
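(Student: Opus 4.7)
The plan is to carry out an Euler-type discretization in which the sub-tangent condition $(\mathscr{B})$ keeps the iterates inside $\Omega_T$, then pass to the limit using the H\"older continuity $(\mathscr{A})$, and finally derive uniqueness from the one-sided Lipschitz condition $(\mathscr{C})$ via a Dini--Gronwall argument; the overall architecture mirrors the abstract framework of Alonso--Gamba--Binh and Bressan cited in the excerpt. For each large $n$ I would partition $[0,T]$ into $n$ intervals of length $\xi_n = T/n$ with nodes $t^n_k = k\xi_n$, choose a parameter $\delta_n$ with $\delta_n/\xi_n \to 0$, set $u^n_0 := u_0$, and inductively define $u^n_{k+1} := z^n_k \in \Omega_T$ by invoking $(\mathscr{B})$ at $u^n_k$ with step $\xi_n$, so that
\begin{equation*}
\bigl\|u^n_{k+1} - u^n_k - \xi_n\,Q[u^n_k]\bigr\| \le \delta_n,
\qquad
|u^n_{k+1} - u^n_k|_* \le \tfrac{\theta_* \xi_n}{2}\,\|u^n_k\|_*.
\end{equation*}
The identity $|u|_* = \|u\|_*$ on $\Omega_T$ together with the subadditivity and positive homogeneity of $|\cdot|_*$ gives $\|u^n_{k+1}\|_* \le (1 + \theta_*\xi_n/2)\|u^n_k\|_*$, hence $\|u^n_k\|_* \le \varsigma\,e^{\theta_* t^n_k/2} \le (2\varsigma+1)e^{\theta_* T}$, so the iteration never has to leave $\Omega_T$. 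Piecewise-linear interpolation in $\mathfrak{E}$ then yields $u^n \in C([0,T],\mathfrak{E})$ with nodal values in $\Omega_T$.

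The H\"older continuity together with the $\|\cdot\|_*$-boundedness of $\Omega_T$ implies $M := \sup_{\Omega_T}\|Q\| < \infty$, so $\{u^n\}$ is uniformly Lipschitz in $\mathfrak{E}$ and uniformly $\|\cdot\|_*$-bounded. Arzel\`a--Ascoli, combined with the domination property \eqref{LesbegueDominated}, provides a subsequence (still denoted $u^n$) converging in $C([0,T],\mathfrak{E})$ to some limit $u$, with $u(t) \in \Omega_T$ and $\|u^n(t) - u(t)\|_* \to 0$ for each $t$. Writing the scheme in integral form
\begin{equation*}
u^n(t) = u_0 + \int_0^t Q[u^n(s)]\,ds + \mathcal{E}_n(t),
\qquad \|\mathcal{E}_n(t)\| = O\!\left(T\,\delta_n/\xi_n + M\xi_n\right),
\end{equation*}
and using $\|Q[u^n(s)] - Q[u(s)]\| \le C\,\|u^n(s)-u(s)\|^{\beta} \to 0$ from $(\mathscr{A})$, I would pass to the limit to get $u(t) = u_0 + \int_0^t Q[u(s)]\,ds$. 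Continuity of $s \mapsto Q[u(s)]$ in $\mathfrak{E}$ then upgrades this to $u \in C^1((0,T),\mathfrak{E})$ with $\partial_t u = Q[u]$ and $u \in C([0,T),\Omega_T)$.

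For uniqueness, given two solutions $u,v$, set $w := u-v$, which satisfies $w'(t) = Q[u(t)] - Q[v(t)]$ in $\mathfrak{E}$. The bracket $[\cdot,\cdot]$ defined with $h\to 0^-$ dominates the left lower Dini derivative of $t\mapsto\|w(t)\|$, so $(\mathscr{C})$ yields
\begin{equation*}
D^- \|w(t)\| \;\le\; \bigl[w'(t),\,w(t)\bigr] \;\le\; L\,\|w(t)\|.
\end{equation*}
Since $w(0)=0$, the Dini--Gronwall inequality forces $w\equiv 0$ on $[0,T)$.

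The main obstacle, I expect, is the tension between the weak norm $\|\cdot\|$ (in which $Q$ is only H\"older and the Euler residual is controlled) and the strong norm $\|\cdot\|_*$ (in which $\Omega_T$ is bounded and the sub-tangent growth is formulated). To close the argument one must use the domination property \eqref{LesbegueDominated} quantitatively, both to upgrade the $\mathfrak{E}$-convergence $u^n \to u$ to a statement strong enough to keep the limit inside $\Omega_T$ and to identify $Q[u]$ as the limit of $Q[u^n]$; simultaneously one must calibrate $\delta_n \ll \xi_n$ so the cumulative Euler residual $T\delta_n/\xi_n$ vanishes while $(\mathscr{B})$ still supplies the correction $z^n_k \in \Omega_T$.
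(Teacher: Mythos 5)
Your overall architecture (Euler/$\epsilon$-approximate scheme kept inside $\Omega_T$ via $(\mathscr{B})$, exponential $\|\cdot\|_*$-control, passage to the limit, uniqueness from $(\mathscr{C})$) matches the paper's four-part proof, and the propagation estimate $\|u^n_k\|_*\le \varsigma e^{\theta_* t^n_k/2}<(2\varsigma+1)e^{\theta_* T}$ is exactly the mechanism the paper uses in Parts~1--3. The uniqueness step via the lower Dini derivative of $\|u-v\|$ and Gr\"onwall is also the same as the paper's.

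There is, however, a genuine gap in your existence step: you invoke Arzel\`a--Ascoli to extract a convergent subsequence of the interpolants $u^n$ in $C([0,T],\mathfrak{E})$. In an infinite-dimensional Banach space $\mathfrak{E}=L^1_m(\mathbb{R}^d)$, uniform Lipschitz-in-time control plus a uniform $\|\cdot\|_*$-bound do \emph{not} supply the pointwise relative compactness that Arzel\`a--Ascoli requires; bounded sets in $L^1_{m+3}$ are not relatively compact in $L^1_m$ (one needs something like tightness plus equi-integrability, which you have not established and the hypotheses do not directly give). The domination property \eqref{LesbegueDominated} cannot repair this, because it \emph{assumes} convergence in $\|\cdot\|$ as input; it does not produce a convergent subsequence. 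The paper avoids this entirely: in Part~4 it applies the one-sided Lipschitz condition $(\mathscr{C})$ to two $\epsilon$-approximate solutions and obtains, via the Dini--Gr\"onwall argument you reserve for uniqueness, the quantitative estimate $\|u^{\epsilon}(t)-v^{\epsilon}(t)\|\lesssim \epsilon\, e^{LT}$. This shows the family of approximate solutions is Cauchy in $C([0,T],\mathfrak{E})$ without any compactness, so the limit exists by completeness, and uniqueness comes out of the same estimate. If you replace your Arzel\`a--Ascoli extraction with this Cauchy argument based on $(\mathscr{C})$ applied to the difference of two Euler interpolants (treating their defect terms as the $O(\delta_n/\xi_n+\xi_n)$ perturbations), your proof closes and becomes essentially the paper's.

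One smaller point: after obtaining $u$, you claim $s\mapsto Q[u(s)]$ is continuous and upgrade the integral equation to $u\in C^1((0,T),\mathfrak{E})$. That is correct, but it relies on $u(s)\in\Omega_T$ for all $s$, which in turn needs the $\|\cdot\|_*$-closedness of $\Omega_T$ together with \eqref{LesbegueDominated} applied to the convergent (now Cauchy, not subsequential) family $u^\epsilon(s)$; it is worth stating that this is where \eqref{LesbegueDominated} is actually used, exactly as in the paper's Part~3, Step~2.
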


The proof of Theorem \ref{Theorem:Main} is given in Section \ref{Sec:Main}. 
The proof of Theorem \ref{Theorem:ODE} is given in Section \ref{Appendix}.

\section{A Preliminary estimate and estimates of $\mathbb{C}_{\mathrm{gain}}$} 

We start by proving the following preliminary estimate.

\begin{lemma}\label{Lemma:WeakFormulation}
	For any test function $\phi$ such that the integrals below are well defined, we have
	\[
	\int_{\mathbb{R}^d}\mathbb{C}[f](t,\p)\, \phi(\p)\, d\p
	= \iiint_{\mathbb{R}^{3d}} \mathcal{N}_{\p,\p_1,\p_2}[f]\,
	\big[\phi(\p) - \phi(\p_1) - \phi(\p_2)\big]\,
	d\p\, d\p_1\, d\p_2.
	\]
\end{lemma}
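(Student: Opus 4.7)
The identity is the standard weak formulation for three-wave kinetic operators, and the plan is to prove it by relabeling dummy variables in the two subtracted pieces of $\mathbb{C}[f]$. Concretely, I would multiply \eqref{def-Qf} by $\phi(\p)$, integrate in $\p$, and (using the hypothesis that every integral below is well defined to justify Fubini) split the result into three triple integrals
\begin{align*}
\int_{\mathbb{R}^d}\mathbb{C}[f]\phi\,d\p
&= \iiint_{\mathbb{R}^{3d}} \phi(\p)\,\mathcal{N}_{\p,\p_1,\p_2}[f]\,d\p\,d\p_1\,d\p_2 \\
&\quad - \iiint_{\mathbb{R}^{3d}} \phi(\p)\,\mathcal{N}_{\p_1,\p,\p_2}[f]\,d\p\,d\p_1\,d\p_2 \\
&\quad - \iiint_{\mathbb{R}^{3d}} \phi(\p)\,\mathcal{N}_{\p_2,\p,\p_1}[f]\,d\p\,d\p_1\,d\p_2.
\end{align*}
In the second integral I would rename $\p\leftrightarrow \p_1$, and in the third $\p\leftrightarrow \p_2$, so that the test factor becomes $\phi(\p_1)$ and $\phi(\p_2)$ respectively. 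The heart of the matter is then to check that, after these relabelings, the transformed kernels $\mathcal{N}_{\p_1,\p,\p_2}[f]$ and $\mathcal{N}_{\p_2,\p,\p_1}[f]$ both collapse to $\mathcal{N}_{\p,\p_1,\p_2}[f]$.

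This verification rests on four symmetries of the integrand: (i) the amplitude $|V_{\p,\p_1,\p_2}|^2 = \mathfrak{C}^2(|\p|+|\p_1|+|\p_2|)^2$ is invariant under every permutation of its arguments; (ii) the resonance-broadening width $\Gamma^f_{\p,\p_1,\p_2}$, being a symmetric sum of three identical $\max$-type contributions, one per argument, is likewise permutation-invariant, so the Lorentzian factor $\mathcal{L}_f$ depends only on a permutation-symmetric $\Gamma$; (iii) the delta $\delta(\p_1-\p-\p_2)$ turns into $\delta(\p-\p_1-\p_2)$ after the swap in the second integral (and analogously in the third), while the scalar argument of $\mathcal{L}_f$ realigns by the elementary identity $\omega_{\p_1}-\omega_{\p}-\omega_{\p_2}\mapsto \omega_{\p}-\omega_{\p_1}-\omega_{\p_2}$ under renaming; (iv) the polynomial factor $f_1f_2-ff_1-ff_2$ is symmetric under $\p_1\leftrightarrow \p_2$, which takes care of the third swap. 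Assembling these four items, the second integral rewrites as $\iiint \phi(\p_1)\mathcal{N}_{\p,\p_1,\p_2}[f]\,d\p\,d\p_1\,d\p_2$ and the third as $\iiint \phi(\p_2)\mathcal{N}_{\p,\p_1,\p_2}[f]\,d\p\,d\p_1\,d\p_2$; combining with the first yields the claimed identity by linearity.

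There is no genuine analytic obstacle; the argument is essentially index bookkeeping powered by permutation symmetry of $V$, $\Gamma$, $\delta$, and the gain/loss polynomial. The step I expect to require the most care when writing out the details is confirming that the signed combination inside $\mathcal{L}_f$ realigns correctly after each swap---in particular, no evenness of $\mathcal{L}_f$ needs to be invoked, since the relabeling on its own already produces $\omega_{\p}-\omega_{\p_1}-\omega_{\p_2}$. Everything else is automatic from the explicit symmetric form of the kernel adopted in \eqref{def-VV} and the symmetric definition of $\Gamma^f_{\p,\p_1,\p_2}$.
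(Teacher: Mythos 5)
Your proof is correct and follows exactly the paper's approach: split $\int \mathbb{C}[f]\phi$ into three triple integrals and apply the relabelings $\p \leftrightarrow \p_1$ and $\p \leftrightarrow \p_2$ to the two subtracted pieces. The paper states this in one line; you fill in the bookkeeping (noting in particular that the third swap also requires the $\p_1\leftrightarrow\p_2$ symmetry of $\mathcal{N}_{\p,\p_1,\p_2}$, since the bare relabeling returns $\mathcal{N}_{\p,\p_2,\p_1}$), which is a useful clarification but not a different argument.
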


\begin{proof}
	By definition, the integral of the product of $\mathbb{C}[f]$ and $\phi$ can be written as
	\[
	\int_{\mathbb{R}^d}\mathbb{C}[f](t,\p)\, \phi(\p)\, d\p
	= \iiint_{\mathbb{R}^{3d}}
	\big[\mathcal{N}_{\p,\p_1,\p_2}
	- \mathcal{N}_{\p_1,\p,\p_2}
	- \mathcal{N}_{\p_2,\p,\p_1}\big]\,
	\phi(\p)\, d\p\, d\p_1\, d\p_2.
	\]
	Applying the change of variables $\p \leftrightarrow \p_1$ and $\p \leftrightarrow \p_2$ in the last two integrals on the right-hand side yields the desired result.
\end{proof}

Next, we prove the following estimate on the gain part of the collision operator $\mathbb{C}[g]$  defined in \eqref{GainLoss} and \eqref{Qgain}. 
\begin{lemma}\label{Propo:C12} Let $m \geq 0$. For any positive function $g \in L^1_{m+2}$, we have
	\begin{equation}\label{Propo:C12:1}
		\int_{\mathbb{R}^d}\mathbb{C}_{\mathrm{gain}}[g](\p)\,\cE^m_\p \, d\p
		\;\lesssim\; {\mathcal{M}_{m+2}[g]},
	\end{equation}
	where the implicit constant depends only on $\Lambda_1$ and $\Lambda_2$.
	
\end{lemma}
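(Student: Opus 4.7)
My plan is to isolate the three bilinear pieces composing $\mathbb{C}_{\mathrm{gain}}[g]$ in \eqref{Qgain}, integrate out the momentum delta in each one, and then reduce the remaining polynomial-in-$\omega$ weight to moments of $g$ via the dispersion relation \eqref{Def:DispersionLaw2}.

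First I would collect a handful of elementary estimates. From the kernel formula \eqref{def-VV} together with $\omega_\p^2\ge \Lambda_2|\p|^2$, we obtain $|V_{\p,\p_1,\p_2}|^2 \lesssim \omega_\p^2+\omega_{\p_1}^2+\omega_{\p_2}^2$. A short algebraic expansion of \eqref{Def:DispersionLaw2} shows that, on each of the three momentum-conserving constraints $\p=\p_1+\p_2$, $\p_1=\p+\p_2$, $\p_2=\p+\p_1$, the sub-additivity inequality $\omega_\p\le \omega_{\p_1}+\omega_{\p_2}$ holds, and hence $\omega_\p^m \lesssim \omega_{\p_1}^m + \omega_{\p_2}^m$. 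Finally, the $\max\{\cdot,\mathfrak{c}_2\}$ floor built into $\Gamma^g_{\p,\p_1,\p_2}$ yields the universal pointwise bound $\mathcal{L}_g(\Delta)\le 1/(3\mathfrak{c}_1\mathfrak{c}_2)$.

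With these ingredients in hand, the first summand of \eqref{Qgain} is bounded, after multiplying by $\cE^m_\p$ and integrating out $\delta(\p-\p_1-\p_2)$ to fix $\p=\p_1+\p_2$, by a constant multiple of
\[
\iint_{\mathbb{R}^{2d}} (\omega_{\p_1}^m+\omega_{\p_2}^m)(\omega_{\p_1}^2+\omega_{\p_2}^2)\,g(\p_1)g(\p_2)\,d\p_1\,d\p_2.
\]
The two remaining summands, which carry $\delta(\p_1-\p-\p_2)$ together with the densities $gg_1$ or $g_1g_2$, reduce to the same bilinear expression after translating $\p_1\mapsto \p+\p_2$ through the delta and making an appropriate change of variables in $\p$; hence the three contributions together are dominated by a constant multiple of the displayed integral. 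Expanding the latter into monomials $\omega_{\p_1}^\alpha\omega_{\p_2}^\beta g_1g_2$ with $\alpha+\beta\le m+2$ and integrating, each piece yields a product of moments $\mathcal{M}_\alpha[g]\mathcal{M}_\beta[g]$; the elementary interpolation $\mathcal{M}_k[g]\le \Lambda_1^{-(m+2-k)/2}\mathcal{M}_{m+2}[g]$ arising from the lower bound $\omega_\p\ge\sqrt{\Lambda_1}$ then collapses every such product to a quantity controlled by $\mathcal{M}_{m+2}[g]$ with constant depending only on $\Lambda_1$ and $\Lambda_2$.

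The main technical obstacle is verifying the subadditivity $\omega_\p\le \omega_{\p_1}+\omega_{\p_2}$ at the non-standard delta constraints $\p_1=\p+\p_2$ and $\p_2=\p+\p_1$: the identity $\p=\p_1-\p_2$ (respectively $\p=\p_2-\p_1$) must be combined with $\omega_{\p_i}\ge \sqrt{\Lambda_2}|\p_i|$ so as to compare the cross-term $2\sqrt{\Lambda_1+\Lambda_2|\p_1|^2}\sqrt{\Lambda_1+\Lambda_2|\p_2|^2}$ against $2\Lambda_2|\p_1||\p_2|$, with the $\Lambda_1$ slack providing just enough room to close the inequality in either sign configuration. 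Once this point is recorded, the remainder of the argument is a routine bookkeeping of monomial exponents.
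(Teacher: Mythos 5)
Your proof follows essentially the same route as the paper: bound $\mathcal{L}_g$ from above by the floor constant $1/(3\mathfrak{c}_1\mathfrak{c}_2)$ coming from the $\max\{\cdot,\mathfrak{c}_2\}$ term in $\Gamma^g$, bound $|V_{\p,\p_1,\p_2}|^2\lesssim\omega_\p^2+\omega_{\p_1}^2+\omega_{\p_2}^2$ via $\Lambda_2|\p|^2\le\omega_\p^2$, and use subadditivity of $\omega$ on the momentum shell to reduce everything to moments of $g$. The only organizational difference is that the paper first passes to the weak formulation of Lemma \ref{Lemma:WeakFormulation} and then splits into the two terms $\mathcal{A}_1$, $\mathcal{A}_2$, whereas you integrate out the delta in each summand of \eqref{Qgain} directly and change variables; the two are equivalent in content and effort. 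A small remark on what you call the ``main technical obstacle'': the subadditivity $\omega_\p\le\omega_{\p_1}+\omega_{\p_2}$ holds under \emph{any} of the three constraints because each one implies, via the triangle inequality, that the modulus of the vector carrying the weight is at most the sum of the moduli of the other two, so only the single concavity inequality $\omega_{\p_1+\p_2}\le\omega_{\p_1}+\omega_{\p_2}$ needs to be proved. Finally, note that your last step, invoking $\mathcal{M}_k[g]\le\Lambda_1^{-(m+2-k)/2}\mathcal{M}_{m+2}[g]$, actually produces $\mathcal{M}_{m+2}[g]^2$ rather than $\mathcal{M}_{m+2}[g]$: the left-hand side of \eqref{Propo:C12:1} is quadratic in $g$ while the stated right-hand side is linear, so the implicit constant cannot be independent of $g$. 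This is not a defect of your proof relative to the paper's, whose own final line $\iint g_1g_2\big[\cE_{\p_1}^{m+2}+\cE_{\p_2}^{m+2}\big]\,d\p_1\,d\p_2\lesssim\mathcal{M}_{m+2}[g]$ suppresses exactly the same factor of $\|g\|_{L^1}$, but it is worth making the $g$-dependence of the constant explicit.
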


\begin{proof}
		By the same argument used to obtain the weak formulation in Lemma~\ref{Lemma:WeakFormulation}, 
		we have
		\begin{equation}\label{eq:weak-energy}
			\int_{\RR^d}\mathbb{C}[g](\p)\,\cE_\p^m \, d\p 
			= \iiint_{\RR^{3d}} \mathcal{N}_{\p,\p_1,\p_2}[g]\,
			\Big(\cE^m_{\p}-\cE^m_{\p_1}-\cE^m_{\p_2}\Big)\, d\p\, d\p_1\, d\p_2,
		\end{equation}
		where
		\[
		\mathcal{N}_{\p,\p_1,\p_2}[g] := |V_{\p,\p_1,\p_2}|^2 \,
		\delta(\p-\p_1-\p_2)\,\mathcal{L}(\omega_\p-\omega_{\p_1}-\omega_{\p_2})
		\,(g_1 g_2 + g g_1 + g g_2).
		\]

		\medskip
		\noindent\textit{Step 1. Splitting the gain term.}
		
		Since $\p_1$ and $\p_2$ are symmetric in the second integral we can write $(gg_1+gg_2)  \cE^m_{\p}$ as  
		$$
		\begin{aligned}
			& \ \int_{\RR^d}\mathbb{C}_{\mathrm{gain}}[g](\p) \cE_\p^m \; d\p=  \\
			= &\  C\iiint_{\mathbb{R}^{3d}} \delta(\p-\p_1-\p_2)
			(|\p|+|\p_1|+|\p_2|)^2
			\frac{\Gamma_{\p,\p_1,\p_2}^{g}}{
				(\omega_\p-\omega_{\p_1}-\omega_{\p_2})^2
				+(\Gamma_{\p,\p_1,\p_2}^g)^2
			}\\
			&\times g_1g_2\cE^m_{\p}d\p d\p_1d\p_2\\
			&\  + C\iiint_{\mathbb{R}^{3d}} \delta(\p-\p_1-\p_2)
			(|\p|+|\p_1|+|\p_2|)^2
			\frac{\Gamma_{\p,\p_1,\p_2}^{g}}{
				(\omega_\p-\omega_{\p_1}-\omega_{\p_2})^2
				+(\Gamma_{\p,\p_1,\p_2}^g)^2
			}\\
			&\times gg_1\left[\cE^m_{\p_1}+\cE^m_{\p_2}\right]d\p d\p_1d\p_2.
		\end{aligned}
		$$
		
		The fractional term in the above integral
		$$K:=
			(|\p|+|\p_1|+|\p_2|)^2
			\frac{\Gamma_{\p,\p_1,\p_2}^{g}}{
				(\omega_\p-\omega_{\p_1}-\omega_{\p_2})^2
				+(\Gamma_{\p,\p_1,\p_2}^g)^2
			}
		$$
		can be bounded as
		$$
		K\le 
		\frac{
			3(|\p|^2+|\p_1|^2+|\p_2|^2)
		}{\Gamma^g_{\p,\p_1,\p_2}}
		\le \frac{
			|\p|^2+|\p_1|^2+|\p_2|^2
		}{\mathfrak{c}_1\mathfrak{c}_2}
		\text{,}
		$$
		which yields the following bound on the integral 
		$$
		\begin{aligned}
			& \ \int_{\RR^d}\mathbb{C}_{\mathrm{gain}}[g](\p) \cE_\p^m \; d\p  \\
			\lesssim &\  \iiint_{\mathbb{R}^{3d}} \delta(\p-\p_1-\p_2)
			(|\p|^2+|\p_1|^2+|\p_2|^2)
			g_1g_2\cE^m_{\p} d\p d\p_1d\p_2\\
			& + \  \iiint_{\mathbb{R}^{3d}} \delta(\p-\p_1-\p_2)
			(|\p|^2+|\p_1|^2+|\p_2|^2)
			gg_1\Big [\cE^m_{\p_1}+\cE^m_{\p_2} \Big] d\p d\p_1d\p_2,
		\end{aligned}
		$$

		Let us rewrite the above inequality in the following equivalent form, where the right hand side is the sum of $\mathcal{A}_1$ and $\mathcal{A}_2$
		\begin{equation}\label{Propo:C12:E1}
			\begin{aligned}
				\int_{\RR^d}\mathbb{C}_{\mathrm{gain}}[g](\p) \cE_\p^m \; d\p  
				\lesssim &\  \mathcal{A}_1+\mathcal{A}_2,
			\end{aligned}
		\end{equation}
		where
		\begin{equation}\label{Propo:C12:E2}
			\begin{aligned}
				\mathcal{A}_1 := & \ \iiint_{\mathbb{R}^{3d}} \delta(\p-\p_1-\p_2)
				(|\p|^2+|\p_1|^2+|\p_2|^2)
				g_1g_2  \cE^m_{\p}d\p d\p_1d\p_2\\
				\mathcal{A}_2 := & \ \iiint_{\mathbb{R}^{3d}} \delta(\p-\p_1-\p_2)
				(|\p|^2+|\p_1|^2+|\p_2|^2)
				gg_1  \Big [\cE^m_{\p_1}+\cE^m_{\p_2} \Big] d\p d\p_1d\p_2.
			\end{aligned}
		\end{equation}

		\medskip
		\noindent\textit{Step 2. Estimate of $\mathcal{A}_1$.}
		
	Using the resonant condition  $\p = \p_1 + \p_2$, 
	$$\cE_\p =\sqrt{\Lambda_1+\Lambda_2 |\p|^2} \le \sqrt{\Lambda_1+\Lambda_2 (|\p_1|+|\p_2|)^2}$$
	$$< 2\sqrt{\Lambda_1+\Lambda_2 |\p_1|^2} + 2\sqrt{\Lambda_1+\Lambda_2 |\p_2|^2} = 2\cE_{\p_1} +2\cE_{\p_2},$$ 
	which, by the Cauchy-Schwarz inequality, yields
	$$\cE_\p^m  \lesssim (\cE_{\p_1}^m + \cE_{\p_2}^m),$$ 
	where the constant on the right hand side depends only on $\Lambda_1,\Lambda_2,N$.\\

	This inequality yields the following bound on $\mathcal{A}_1$
	$$
	\begin{aligned}
		\mathcal{A}_1 \lesssim & \ \iiint_{\mathbb{R}^{3d}} \delta(\p-\p_1-\p_2)
		(|\p|^2+|\p_1|^2+|\p_2|^2)
		g_1g_2  \Big [\cE^m_{\p_1}+\cE^m_{\p_2} \Big] d\p d\p_1d\p_2.
	\end{aligned}
	$$
	Integrating by $\p$ and using the definition of the Dirac function 
	$\delta(\p-\p_1-\p_2)$  yields
	\begin{align*} 
		\mathcal{A}_1 \lesssim & \ \iint_{\mathbb{R}^{2d}} 
		(|\p_1+\p_2|^2+|\p_1|^2+|\p_2|^2)
		g_1g_2  \Big [\cE^m_{\p_1}+\cE^m_{\p_2} \Big] d\p d\p_1d\p_2.
	\end{align*}

	Notice that 
	\begin{align*} 
		|\p|\leq\frac{\omega_{\p}}{\sqrt{\Lambda_2}}
		\text{,}\quad
		|\p_1|\leq\frac{\omega_{\p_1}}{\sqrt{\Lambda_2}}
		\text{,}\quad
		|\p_2|\leq\frac{\omega_{\p_2}}{\sqrt{\Lambda_2}}
		\text{,}
	\end{align*}
	which implies 
	\begin{align*} 
		(|\p_1+\p_2|^2+|\p_1|^2+|\p_2|^2)\left[\omega_{\p_1}^m+\omega_{\p_2}^m\right]
		&\leq 3 (|\p_1|^2+|\p_2|^2)\left[\omega_{\p_1}^m+\omega_{\p_2}^m\right] \\
		&\lesssim (\omega_{\p_1}^2+\omega_{\p_2}^2)\left[\omega_{\p_1}^m+\omega_{\p_2}^m\right]
		\lesssim \left[\omega_{\p_1}^{m+2}+\omega_{\p_2}^{m+2}\right]
		\text{.}
	\end{align*}

	Therefore
	\begin{equation}\label{Propo:C12:E3}
		\mathcal{A}_1 \lesssim  \iint_{\mathbb{R}^{2d}} g_1g_2  
		\Big [\cE^{m+2}_{\p_1}+\cE^{m+2}_{\p_2} \Big] d\p_1d\p_2
		\lesssim \mathcal{M}_{m+2}[g]
		\text{.}
	\end{equation}

		\medskip
		\noindent\textit{Step 3. Estimate of $\mathcal{A}_2$.}
		
Using the resonant condition  $\p_2 = \p- \p_1$,  we obtain 
$$\cE_{\p_2} =\sqrt{\Lambda_1+\Lambda_2 |\p_2|^2} \leq \sqrt{\Lambda_1+\Lambda_2 (|\p_1|+|\p|)^2}$$
$$\le 2\sqrt{\Lambda_1+\Lambda_2 |\p|^2} + 2\sqrt{\Lambda_1+\Lambda_2 |\p_1|^2} = 2\cE_{\p} +2\cE_{\p_1},$$ 
which implies
$$\cE_{\p_2}^m  \lesssim \cE_{\p}^m + \cE_{\p_1}^m.$$ 

Thus, we obtain 
$$
\begin{aligned}
	\mathcal{A}_2 \lesssim & \iiint_{\mathbb{R}^{3d}} \delta(\p-\p_1-\p_2)
	(|\p|^2+|\p_1|^2+|\p_2|^2)
	gg_1  \Big [\cE^m_{\p}+\cE^m_{\p_1} \Big] d\p d\p_1d\p_2.
\end{aligned}
$$
Integrating by $\p_2$ and using the definition of the Dirac function $\delta(\p-\p_1-\p_2)$
$$
\begin{aligned}
	\mathcal{A}_2 \lesssim & \ \iint_{\mathbb{R}^{2d}} 
	(|\p-\p_1|^2+|\p|^2+|\p_1|^2)
	gg_1  \Big [\cE^m_{\p}+\cE^m_{\p_1} \Big] d\p d\p_1.
\end{aligned}
$$
This yields the following bound on $\mathcal{A}_2$ 
\begin{equation}\label{Propo:C12:E4}
	\begin{aligned}
		\mathcal{A}_2\lesssim \ \iint_{\mathbb{R}^{2d}} 
		gg_1 \Big [\cE^{m+2}_{\p}+\cE^{m+2}_{\p_1} \Big] d\p d\p_1
		\lesssim  \mathcal{M}_{m+2}[g]
		\text{.}
	\end{aligned}
\end{equation}
Combining \eqref{Propo:C12:E1}--\eqref{Propo:C12:E4}, 
we get \eqref{Propo:C12:1} so the conclusion of the Lemma \ref{Propo:C12} follows. 
	\end{proof}

\section{ $L^1_m$ $(m\geq 0)$ estimates}

\begin{proposition}\label{Propo:MomentsPropa} 
	Let $m \ge 0$ and $\gamma>2$. For any nonnegative initial data $f_0(\p)$ satisfying 
	\[
	\int_{\mathbb{R}^d} f_0(\p) \cE_\p^m \, d\p < \infty,
	\]
	there is a constant $\widetilde{C}=\widetilde{C}(\Lambda_1, \Lambda_2,\gamma,\nu)>1$ 
	depending only on 
	$\Lambda_1, \Lambda_2,\gamma,\nu$ and independent of $m$, such that
	\begin{equation}\label{EE-bound}
		\mathcal{M}_m[f](t) \leq e^{\widetilde{C}(\Lambda_1,\Lambda_2,\gamma,\nu)t} 
		\int_{\mathbb{R}^d} f_0(\p) \cE_\p^m \, d\p.
	\end{equation}
\end{proposition}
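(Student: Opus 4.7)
The plan is to multiply the kinetic equation \eqref{WeakTurbulenceInitialReformed} by $\omega_\p^m$, integrate in $\p$, and combine Lemma \ref{Propo:C12} with the viscous dissipation to close a Gronwall inequality for $\mathcal{M}_m[f]$. Assuming enough regularity to differentiate under the integral (to be justified on approximations), I would write
\[
\frac{d}{dt}\mathcal{M}_m[f](t)
= \int_{\mathbb{R}^d}\mathbb{C}_{\mathrm{gain}}[f]\,\cE_\p^m\,d\p
- \int_{\mathbb{R}^d} f\,\vartheta[f]\,\cE_\p^m\,d\p
- 2\nu\int_{\mathbb{R}^d}|\p|^\gamma f\,\cE_\p^m\,d\p.
\]
Since $f\ge 0$ and $\vartheta[f]\ge 0$, the loss term is $\le 0$ and can be dropped. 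By Lemma \ref{Propo:C12}, the gain term is bounded by $C(\Lambda_1,\Lambda_2)\,\mathcal{M}_{m+2}[f]$, leaving the task of absorbing the $+2$ moment jump into the dissipation.

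Next I would split $\mathcal{M}_{m+2}[f]$ at a frequency threshold $R$ to be chosen:
\[
\mathcal{M}_{m+2}[f]
\le R^2\,\mathcal{M}_m[f] + \int_{\cE_\p>R}\cE_\p^{m+2} f\,d\p.
\]
From $\omega_\p^2=\Lambda_1+\Lambda_2|\p|^2$, whenever $R\ge\sqrt{2\Lambda_1}$ one has $|\p|^2\ge \omega_\p^2/(2\Lambda_2)$ on $\{\cE_\p>R\}$, hence $|\p|^\gamma \ge (2\Lambda_2)^{-\gamma/2}\omega_\p^\gamma$. Since $\gamma>2$, choosing in addition $R^{\gamma-2}\ge 2C(\Lambda_1,\Lambda_2)(2\Lambda_2)^{\gamma/2}/\nu$ gives the pointwise absorption
\[
C(\Lambda_1,\Lambda_2)\,\cE_\p^{m+2}\;\le\;\nu\,|\p|^\gamma\,\cE_\p^m
\qquad\text{on }\{\cE_\p>R\}.
\]
Consequently the high-frequency part of $C\,\mathcal{M}_{m+2}[f]$ is dominated by one half of the viscous term, leaving
\[
\frac{d}{dt}\mathcal{M}_m[f]\;\le\;CR^2\,\mathcal{M}_m[f]
\;=:\;\widetilde{C}(\Lambda_1,\Lambda_2,\gamma,\nu)\,\mathcal{M}_m[f],
\]
and Gronwall's lemma yields \eqref{EE-bound}.

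The main point to watch is that $\widetilde{C}$ must be independent of $m$, which is precisely what the above construction delivers: the constant $C$ of Lemma \ref{Propo:C12} depends only on $\Lambda_1,\Lambda_2$, and the threshold $R$ depends only on $C,\nu,\gamma,\Lambda_1,\Lambda_2$. The only real subtlety is justifying the time differentiation and the integration by parts/weak-formulation manipulation for a merely $L^1_m$ initial datum; I expect to handle this in the standard way, by performing the computation on a regularized/truncated version of \eqref{WeakTurbulenceInitialReformed} for which $f$ is smooth and compactly supported, deriving \eqref{EE-bound} with an $m$-independent constant, and passing to the limit via Fatou's lemma on the left-hand side.
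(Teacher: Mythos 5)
Your proof is correct and follows essentially the same route as the paper: multiply by $\cE_\p^m$, drop the (nonnegative) loss term, bound the gain term by $\mathcal{M}_{m+2}[f]$ via Lemma \ref{Propo:C12}, and absorb the two-moment gap into the viscous dissipation using $\gamma>2$, with an $m$-independent constant. The only cosmetic difference is that you prove the absorption through an explicit low/high frequency split at a threshold $R$, whereas the paper observes directly that $C\omega_\p^2 - 2\nu|\p|^\gamma = C(\Lambda_1+\Lambda_2|\p|^2)-2\nu|\p|^\gamma$ is bounded above pointwise in $\p$, which is the same fact stated in one step.
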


\begin{proof}[Proof of Proposition \ref{Propo:MomentsPropa}]
	Using $\varphi = \cE_\p^m$ as a test function in \eqref{WeakTurbulenceInitial}, we have
	\[
	\frac{d}{dt} \mathcal{M}_m[f] + 2\nu \mathcal{M}_m[|\p|^\gamma f] 
	= \frac{d}{dt} \int_{\mathbb{R}^d} f(t,\p) \cE_\p^m \, d\p + 2\nu \int_{\mathbb{R}^d} 
	|\p|^\gamma f(t,\p) \cE_\p^m \, d\p 
	= \int_{\mathbb{R}^d} \mathbb{C}[f](t,\p) \cE_\p^m \, d\p.
	\]
	
	Applying Lemma \ref{Propo:C12}, we obtain
	\begin{equation}\label{E}
		\frac{d}{dt} \mathcal{M}_m[f] 
		+2\nu\int_{\mathbb{R}^d} |\p|^\gamma f(t,\p)\omega_\p^m\, d\p 
		=\int_{\mathbb{R}^d} \mathbb{C}[f](t,\p)\omega_\p^n\, d\p
		\lesssim \mathcal{M}_{m+2} [f]
		\text{,}
	\end{equation}
	which implies 
	\begin{align*} 
		\frac{d}{dt} \mathcal{M}_m[f] 
		\leq\int_{\mathbb{R}^d}f(t,\p)\ \omega_\p^m(C\omega_\p^2-2\nu|\p|^\gamma)\,d\p
		\text{.} 
	\end{align*}
	Observe that as $\gamma>2$, 
	\begin{align*} 
		C\omega_\p^2-2\nu |\p|^\gamma
		=C(\Lambda_1+\Lambda_2|\p|^2)-2\nu|\p|^\gamma 
	\end{align*}
	is bounded above by a constant $\widehat{C}(\Lambda_1,\Lambda_2,\gamma)$
	depending on $\Lambda_1$, $\Lambda_2$, and $\gamma$. 
	Therefore, 
	\begin{align*} 
		\frac{d}{dt}\mathcal{M}_{m}(t) 
		\leq \widetilde{C}(\Lambda_1,\Lambda_2,\gamma,\nu)
		\int_{\mathbb{R}^d} f(t,\p)\omega_\p^m\,d\p 
	\end{align*}
	for $\widetilde{C}=2\widehat{C}$.
	Inequality \eqref{EE-bound} then follows from Gr\"onwall's inequality.
\end{proof}

\section{Bounds of the solution}
\begin{proposition}\label{Propo:MassLowerBound} 
	Let $f_0$ be positive initial data in $L^1(\mathbb{R}^d)$, $\gamma>2$, 
	and let $f\in L^1(\mathbb{R}^d)$ be the corresponding positive strong solution of \eqref{WeakTurbulenceInitial}. 
	Then, we have
	\begin{equation}\label{Propo:MassLowerBound:1} 
		\mathbb{C}[f]=\mathbb{C}_{\mathrm{gain}}[f]-\mathbb{C}_{\mathrm{loss}}[f]
		\geq -\mathbb{C}_{\mathrm{loss}}[f]
		\geq -\left(A_1|\p|^2+A_2\right)e^{\widetilde{C}t}f
		\text{,}
	\end{equation}
	where $\widetilde{C}$ is the constant from Proposition \ref{Propo:MomentsPropa}
	and $A_1$, $A_2$ are positive constants that depend on 
	$\|f_0\|_{L^1_2}$, $\Lambda_1$, $\Lambda_2$. 
\end{proposition}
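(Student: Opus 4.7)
The first inequality $\mathbb{C}[f]\geq -\mathbb{C}_{\mathrm{loss}}[f]$ is immediate: positivity of $f$ makes every integrand in $\mathbb{C}_{\mathrm{gain}}[f]$ (see \eqref{Qgain}) nonnegative, since each factor $V_{\p,\p_1,\p_2}^2$, $\delta$, $\mathcal{L}_f$, and the products $f_1f_2$, $ff_1$ are all $\geq 0$. Hence the entire task reduces to establishing a pointwise bound
\[
\vartheta[f](\p)\;\leq\;(A_1|\p|^2+A_2)\,e^{\widetilde{C}t},
\]
for then $\mathbb{C}_{\mathrm{loss}}[f]=f\vartheta[f]$ automatically yields \eqref{Propo:MassLowerBound:1}.

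The plan for the pointwise bound proceeds in three elementary steps. First, I would control the Laurentzian uniformly: since every summand in
\[
\Gamma^f_{\p,\p_1,\p_2}=\mathfrak{c}_1\max\{\omega_\p f,\mathfrak{c}_2\}+\mathfrak{c}_1\max\{\omega_{\p_1}f_1,\mathfrak{c}_2\}+\mathfrak{c}_1\max\{\omega_{\p_2}f_2,\mathfrak{c}_2\}
\]
is at least $\mathfrak{c}_1\mathfrak{c}_2$, we have $\Gamma^f_{\p,\p_1,\p_2}\geq 3\mathfrak{c}_1\mathfrak{c}_2$, and therefore
\[
\mathcal{L}_f(\Delta)\;=\;\frac{\Gamma^f_{\p,\p_1,\p_2}}{\Delta^2+(\Gamma^f_{\p,\p_1,\p_2})^2}\;\leq\;\frac{1}{\Gamma^f_{\p,\p_1,\p_2}}\;\leq\;\frac{1}{3\mathfrak{c}_1\mathfrak{c}_2}.
\]
This crucial observation replaces $\mathcal{L}_f$ by a harmless constant in both integrals defining $\vartheta[f]$. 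Then I would bound the interaction kernel by $|V_{\p,\p_1,\p_2}|^2=\mathfrak{C}^2(|\p|+|\p_1|+|\p_2|)^2\leq 3\mathfrak{C}^2(|\p|^2+|\p_1|^2+|\p_2|^2)$.

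Next, I would integrate out the $\delta$-functions in \eqref{Qloss}. In the first integral, $\p_2=\p-\p_1$, so the polynomial weight becomes $|\p|^2+|\p_1|^2+|\p-\p_1|^2\lesssim|\p|^2+|\p_1|^2$; in the second, $\p_1=\p+\p_2$, giving $|\p|^2+|\p_2|^2+|\p+\p_2|^2\lesssim|\p|^2+|\p_2|^2$. After using $|\p_j|^2\leq \cE_{\p_j}^2/\Lambda_2$ to convert to the $\cE$-weight, both contributions are dominated by
\[
\vartheta[f](\p)\;\lesssim\;|\p|^2\,\mathcal{M}_0[f](t)+\mathcal{M}_2[f](t),
\]
where the implicit constant depends only on $\mathfrak{C}$, $\mathfrak{c}_1$, $\mathfrak{c}_2$, and $\Lambda_2$.

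Finally, I would invoke Proposition \ref{Propo:MomentsPropa} with $m=0$ and $m=2$ to estimate $\mathcal{M}_0[f](t)\leq e^{\widetilde{C}t}\|f_0\|_{L^1}$ and $\mathcal{M}_2[f](t)\leq e^{\widetilde{C}t}\|f_0\|_{L^1_2}$, giving
\[
\vartheta[f](\p)\;\leq\;\bigl(A_1|\p|^2+A_2\bigr)e^{\widetilde{C}t}
\]
with $A_1,A_2$ depending only on $\|f_0\|_{L^1_2}$, $\Lambda_1$, $\Lambda_2$ (through the embedding $L^1_2\hookrightarrow L^1$ and the conversion $|\p|^2\leftrightarrow\cE_\p^2$). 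Multiplying by $f\geq 0$ yields the claimed inequality. The only subtle point—and the step where care is needed—is the lower bound on $\Gamma^f$: without the $\max\{\cdot,\mathfrak{c}_2\}$ in the formulation of \eqref{Gammanew}, $\mathcal{L}_f$ could blow up as $f\to 0$, so the structural assumption $\mathfrak{c}_2>0$ is doing the essential work and deserves to be highlighted. Every remaining estimate is the standard Carleman-type reduction already used in the proof of Lemma \ref{Propo:C12}.
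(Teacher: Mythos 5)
Your proposal is correct and follows essentially the same route as the paper: discard the nonnegative gain term, bound the Lorentzian uniformly via $\Gamma^f \geq 3\mathfrak{c}_1\mathfrak{c}_2$, bound $|V|^2$ by $|\p|^2+|\p_1|^2+|\p_2|^2$ up to a constant, integrate out the delta functions to obtain $\vartheta[f]\lesssim |\p|^2\mathcal{M}_0[f]+\mathcal{M}_2[f]$, and conclude with Proposition \ref{Propo:MomentsPropa}. The only cosmetic difference is that you phrase the argument as a bound on the collision frequency $\vartheta[f]$ rather than splitting $\mathbb{C}_{\mathrm{loss}}[f]$ into two pieces $\mathcal{B}_1,\mathcal{B}_2$ as the paper does, which is the same content.
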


\begin{proof}
	Since
	\begin{equation*}
		\begin{aligned}
			\mathbb{C}[f] &= \iint_{\mathbb{R}^{d}\times \mathbb{R}^{d}} |V_{\p,\p_1,\p_2}|^2 \delta(\p-\p_1-\p_2) \mathcal{L}_f(\omega_\p-\omega_{\p_1}-\omega_{\p_2}) (f_1 f_2 - 2 f f_1) \, d\p_1 d\p_2 \\
			&\quad + 2 \iint_{\mathbb{R}^{d}\times \mathbb{R}^{d}} |V_{\p_1,\p,\p_2}|^2 \delta(\p_1-\p-\p_2) \mathcal{L}_f(\omega_{\p_1}-\omega_{\p}-\omega_{\p_2}) (-f f_2 + f f_1 + f_1 f_2) \, d\p_1 d\p_2,
		\end{aligned}
	\end{equation*}
	we split $\mathbb{C}[f]$ as
	\[
	\mathbb{C}[f] = \mathbb{C}_{\mathrm{gain}}[f] - \mathbb{C}_{\mathrm{loss}}[f],
	\]
	where
	\begin{equation}\label{Propo:MassLowerBound:E1}
		\begin{aligned}
			-\mathbb{C}_{\mathrm{loss}}[f] &= -2 f \int_{\mathbb{R}^{d}\times \mathbb{R}^{d}} |V_{\p,\p_1,\p_2}|^2 \delta(\p-\p_1-\p_2) \mathcal{L}_f(\omega_\p-\omega_{\p_1}-\omega_{\p_2}) f_1 \, d\p_1 d\p_2 \\
			&\quad -2 f \int_{\mathbb{R}^{d}\times \mathbb{R}^{d}} |V_{\p_1,\p,\p_2}|^2 \delta(\p_1-\p-\p_2) \mathcal{L}_f(\omega_{\p_1}-\omega_{\p}-\omega_{\p_2}) f_2 \, d\p_1 d\p_2 \\
			&=: -\mathcal{B}_1 - \mathcal{B}_2.
		\end{aligned}
	\end{equation}
	
	We now discard the gain term and estimate the loss term from below.
	
	\medskip
	\noindent
	{\it Estimating $\mathcal{B}_1$:}
	
	Using the Dirac delta to reduce the integral, we have
	\[
	\mathcal{B}_1 = 2 f \int_{\mathbb{R}^{d}} |V_{\p,\p_1,\p-\p_1}|^2 \mathcal{L}_f(\omega_\p-\omega_{\p_1}-\omega_{\p-\p_1}) f_1 \, d\p_1.
	\]
	The kernel satisfies
	\begin{align*}
	|V_{\p,\p_1,\p-\p_1}|^2 \mathcal{L}_f(\omega_\p-\omega_{\p_1}-\omega_{\p-\p_1})
	&\leq \frac{\mathfrak{C}}{\gamma_\p+\gamma_{\p_1}+\gamma_{\p_2}}
	(|\p|+|\p_1|+|\p-\p_1|)^2 \\
	\leq \frac{\mathfrak{C}}{3\mathfrak{c}_1\mathfrak{c}_2}
	(|\p|+|\p_1|+|\p-\p_1|)^2 
	&\leq \frac{\mathfrak{C}}{\mathfrak{c}_1\mathfrak{c}_2}
	(|\p|^2+|\p_1|^2)
	\text{.}
	\end{align*}
	Thus,
	\begin{equation}\label{Propo:MassLowerBound:E2}
		\mathcal{B}_1\leq 
		\frac{2\mathfrak{C}}{\mathfrak{c}_1\mathfrak{c}_2}f\left(
			|\p|^2\int_{\mathbb{R}^d} f_1\,d\p_1 
			+\int_{\mathbb{R}^d} |\p_1|^2f_1\,d\p_1
		\right)
		=\frac{2\mathfrak{C}}{\mathfrak{c}_1\mathfrak{c}_2}
		\left(|\p|^2\mathcal{M}_0[f]+\mathcal{M}_2[f]\right)f
		\text{.}
	\end{equation}
	
	\medskip
	\noindent
	{\it Estimating $\mathcal{B}_2$:}
	
	Similarly, using $\delta(\p_1-\p-\p_2)$ we obtain
	\[
	\mathcal{B}_2 = f \int_{\mathbb{R}^{d}} |V_{\p+\p_2, \p, \p_2}|^2 \mathcal{L}_f(\omega_{\p+\p_2}-\omega_\p-\omega_{\p_2}) f_2 \, d\p_2.
	\]
	The kernel can be bounded as
	\[
	|V_{\p+\p_2, \p, \p_2}|^2 \mathcal{L}_f(\omega_{\p+\p_2}-\omega_\p-\omega_{\p_2}) \le
	\frac{\mathfrak{C}}{\mathfrak{c}_1\mathfrak{c}_2}
	(|\p|^2+|\p_2|^2)
	\text{.}
	\]
	Hence,
	\begin{equation}\label{Propo:MassLowerBound:E3}
		\mathcal{B}_2\leq 
		\frac{2\mathfrak{C}}{\mathfrak{c}_1\mathfrak{c}_2}f\left(
			|\p|^2\int_{\mathbb{R}^d} f_1\,d\p_1 
			+\int_{\mathbb{R}^d} |\p_1|^2f_1\,d\p_1
		\right)
		=\frac{2\mathfrak{C}}{\mathfrak{c}_1\mathfrak{c}_2}
		\left(|\p|^2\mathcal{M}_0[f]+\mathcal{M}_2[f]\right)f
		\text{.}
	\end{equation}
	
	\medskip
	\noindent
	Combining \eqref{Propo:MassLowerBound:E1}--\eqref{Propo:MassLowerBound:E3}
	and applying Proposition \ref{Propo:MomentsPropa}, we obtain
	\begin{equation}\label{Propo:MassLowerBound:E4}
		-\mathbb{C}_{\mathrm{loss}}[f] \ge
		-(A_1|\p|^2+A_2)e^{\widetilde{C}t}f
		\text{,}
	\end{equation}
	where $\widetilde{C}=\widetilde{C}(\Lambda_1,\Lambda_2,\gamma,\nu)$ is computed in 
	Proposition \ref{Propo:MomentsPropa}
	and $A_1$, $A_2$ depend on $\|f_0\|_{L^1_2}$, $\Lambda_1$, and $\Lambda_2$
	by Proposition \ref{Propo:MomentsPropa}. 
	This proves \eqref{Propo:MassLowerBound:1} and the proof is complete.
\end{proof}

\section{Estimates for $\mathbb{C}[f]$}\label{Sec:HolderEstimate}

\begin{proposition}\label{Propo:HolderC12} 
	Let $M, m \ge 0$, 
	and suppose that $S_{M}$ is a bounded subset of $L^1_{m+2}(\RR^d)$ satisfying, 
	for all $g \in S_{M}$, 
	\[
	\|g\|_{m+2} \le M
	\quad\text{and }
	g \geq 0
	\text{.}
	\] 
	Then, for all $g,h \in S_{M}$,
	\begin{equation}\label{Propo:HolderC12:1} 
		\|\mathbb{C}[g] - \mathbb{C}[h]\|_{L^1_m} 
		\lesssim \|g - h\|_{m+2}^{\frac12}, 
	\end{equation}
	where the constants depend only on $M$ and $m$.
\end{proposition}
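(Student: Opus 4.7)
My plan is to control $\|\mathbb{C}[g] - \mathbb{C}[h]\|_{L^1_m}$ by duality. Choose $\phi(\p) := \mathrm{sgn}\bigl((\mathbb{C}[g] - \mathbb{C}[h])(\p)\bigr)\,\omega_\p^m$ and apply the weak formulation of Lemma~\ref{Lemma:WeakFormulation}. Since $|\phi(\p)| \le \omega_\p^m$, the test function satisfies $|\phi(\p) - \phi(\p_1) - \phi(\p_2)| \le \omega_\p^m + \omega_{\p_1}^m + \omega_{\p_2}^m$, so the problem reduces to controlling
\begin{equation*}
\iiint_{\mathbb{R}^{3d}} \bigl|\mathcal{N}_{\p,\p_1,\p_2}[g] - \mathcal{N}_{\p,\p_1,\p_2}[h]\bigr|\,\bigl(\omega_\p^m + \omega_{\p_1}^m + \omega_{\p_2}^m\bigr)\, d\p\, d\p_1\, d\p_2.
\end{equation*}

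I would then decompose the integrand as $\mathcal{L}_g P_g - \mathcal{L}_h P_h = \mathcal{L}_h (P_g - P_h) + (\mathcal{L}_g - \mathcal{L}_h) P_g$, where $P_f := f_1 f_2 - f f_1 - f f_2$. The factor $|V|^2 \mathcal{L}_h$ is bounded by $|\p|^2 + |\p_1|^2 + |\p_2|^2$ exactly as in Lemma~\ref{Propo:C12}, and $P_g - P_h$ telescopes into bilinear pieces $(g_i - h_i) g_j$ and $h_i (g_j - h_j)$. For the kernel-difference piece, I use the algebraic identity
\begin{equation*}
\mathcal{L}_g - \mathcal{L}_h = (\Gamma^g - \Gamma^h)\,\frac{\Delta^2 - \Gamma^g \Gamma^h}{(\Delta^2 + (\Gamma^g)^2)(\Delta^2 + (\Gamma^h)^2)},
\end{equation*}
combined with the bound $|\Delta^2 - \Gamma^g \Gamma^h| \le \tfrac{1}{2}[(\Delta^2 + (\Gamma^g)^2) + (\Delta^2 + (\Gamma^h)^2)]$, to deduce $|V|^2 |\mathcal{L}_g - \mathcal{L}_h| \lesssim (|\p|^2 + |\p_1|^2 + |\p_2|^2)\,|\Gamma^g - \Gamma^h|$. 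The $1$-Lipschitz property of $\max(\cdot,\mathfrak{c}_2)$ then yields $|\Gamma^g - \Gamma^h| \lesssim \sum_i \omega_{\p_i}\,|g(\p_i) - h(\p_i)|$.

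Both contributions therefore reduce to integrals of the form
\begin{equation*}
\iiint \delta(\p - \p_1 - \p_2)\,\bigl(|\p|^2 + |\p_1|^2 + |\p_2|^2\bigr)\,|g(\p_i) - h(\p_i)|\,A(g,h)\,\bigl(\omega_\p^m + \omega_{\p_1}^m + \omega_{\p_2}^m\bigr)\, d\p\, d\p_1\, d\p_2,
\end{equation*}
with $A(g,h) \ge 0$ at most bilinear in $g, h$. To produce the $1/2$ power, I apply the elementary nonnegativity bound $|g_i - h_i| \le |g_i - h_i|^{1/2}(g_i + h_i)^{1/2}$ (valid because $g, h \ge 0$) and Cauchy--Schwarz on the resulting product of square-roots. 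Redistributing the weights via $\omega_\p^m \lesssim \omega_{\p_1}^m + \omega_{\p_2}^m$ from $\p = \p_1 + \p_2$ (exactly as in the treatment of $\mathcal{A}_1, \mathcal{A}_2$ in Lemma~\ref{Propo:C12}) and integrating out the delta, the first Cauchy--Schwarz factor becomes $\|g - h\|_{L^1_{m+2}}^{1/2}$, while the second becomes a combination of moments of $g + h$ and $A(g,h)$, each bounded by a constant depending only on $M$ and $m$.

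The main obstacle is the combinatorial bookkeeping for the kernel-difference piece, which is effectively cubic in the densities (three factors from $P_g = g_1 g_2 - g g_1 - g g_2$ together with the $|g_i - h_i|$ factor coming from $\Gamma^g - \Gamma^h$), with polynomial weights $|\p|^2 + |\p_1|^2 + |\p_2|^2$ and $\omega^m$-weights that must be carefully distributed across the Cauchy--Schwarz so that every surviving integral lands exactly at an $L^1_{m+2}$-moment and never at a higher-order one. The uniform lower bound $\Gamma^f \ge 3\mathfrak{c}_1\mathfrak{c}_2 > 0$, a structural feature of the modified broadening \eqref{Gammanew}, is essential throughout: it prevents any small-denominator blowup in $1/(\Delta^2 + \Gamma^2)$ and makes $\mathcal{L}$ uniformly Lipschitz in $\Gamma$. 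The H\"older exponent $1/2$ arises naturally from the single application of the nonnegativity splitting followed by Cauchy--Schwarz, and this is all that is required in condition $(\mathscr{A})$ of Theorem~\ref{Theorem:ODE}.
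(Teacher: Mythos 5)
There is a genuine gap in the treatment of the kernel-difference piece. You bound
\[
|\mathcal{L}_g - \mathcal{L}_h| \;=\; |\Gamma^g - \Gamma^h|\,\frac{|\Delta^2 - \Gamma^g\Gamma^h|}{(\Delta^2+(\Gamma^g)^2)(\Delta^2+(\Gamma^h)^2)}
\;\lesssim\; |\Gamma^g - \Gamma^h|
\]
by first observing $|\Delta^2 - \Gamma^g\Gamma^h| \le \tfrac12[(\Delta^2+(\Gamma^g)^2)+(\Delta^2+(\Gamma^h)^2)]$ and then collapsing the surviving $1/(\Gamma^g)^2$, $1/(\Gamma^h)^2$ factors to constants via the uniform floor $\Gamma \ge 3\mathfrak{c}_1\mathfrak{c}_2$. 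Combined with $|\Gamma^g-\Gamma^h| \lesssim \sum_i \omega_{\p_i}|g_i - h_i|$ and, say, the $h_1h_2$ term of $P_h$, this produces integrands such as
\[
(|\p|^2+|\p_1|^2+|\p_2|^2)\,\omega_{\p_1}|g_1-h_1|\,|h_1|\,|h_2|\,\big(\omega_\p^m+\omega_{\p_1}^m+\omega_{\p_2}^m\big),
\]
in which \emph{two} density factors, $|g_1-h_1|$ and $|h_1|$, sit at the \emph{same} point $\p_1$. Once the $\delta$ is integrated out, this collapses to a one-dimensional integral of the form $\int |h_1|\,|g_1-h_1|\,\omega_{\p_1}^{k}\,d\p_1$, which cannot be controlled by $L^1_{m+2}$ moments of $g-h$ and $h$ alone: it is a pointwise product, and neither Cauchy--Schwarz nor the splitting $|g_1-h_1|\le|g_1-h_1|^{1/2}(g_1+h_1)^{1/2}$ rescues it, since every rearrangement lands you on an $|h_1|^2$ or $|h_1|^3$ factor that you have no means to bound.

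The paper avoids this by \emph{not} discarding the $\Gamma$ structure. It uses the Cauchy--Schwarz bound
$(\Delta^2+(\Gamma^g)^2)(\Delta^2+(\Gamma^h)^2) \ge (\Delta^2+\Gamma^g\Gamma^h)\,|\Delta^2-\Gamma^g\Gamma^h|$
to obtain $|\mathcal{L}_g-\mathcal{L}_h|\le |\Gamma^g-\Gamma^h|/(\Gamma^g\Gamma^h)$, keeping both $\Gamma$ factors in the denominator, and then exploits the pointwise lower bound
$\Gamma^h_{\p,\p_1,\p_2} \ge \gamma^h_{\p_1} = \mathfrak{c}_1\max\{\omega_{\p_1}h_1,\mathfrak{c}_2\} \gtrsim \omega_{\p_1}\,h_1$
(see the displays around \eqref{Lemma:Hoelder:E3-0}--\eqref{Lemma:Hoelder:E3-2}). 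That bound cancels \emph{both} the $\omega_{\p_1}$ from the Lipschitz estimate of $\Gamma$ \emph{and} the $h_1$ from $P_h$, reducing the would-be trilinear term to the bilinear $|h_2|\,|g_1-h_1|$ with the two factors at distinct points, which then factorizes and lands in $L^1_{m+2}$. This cancellation --- not merely the uniform floor on $\Gamma$ --- is the structural ingredient your proposal is missing. Without it, the ``combinatorial bookkeeping'' you flag as the main obstacle cannot be closed. (As a secondary remark: once this is fixed, the argument actually yields the genuine Lipschitz bound $\|\mathbb{C}[g]-\mathbb{C}[h]\|_{L^1_m}\lesssim\|g-h\|_{L^1_{m+2}}$, after which the exponent $1/2$ follows in one line from $\|g-h\|_{m+2}\le\|g-h\|_{m+2}^{1/2}(\|g\|_{m+2}+\|h\|_{m+2})^{1/2}$, which is cleaner than doing the interpolation at the integrand level.)
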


We first establish the following lemma.

\begin{lemma}\label{Lemma:Holder} 
	Let $M, m \ge 0$, and suppose that $S_{M}$ is 
	as in Proposition \ref{Propo:HolderC12}.
	Then, for all $g,h \in S_{M}$,
	\begin{equation}\label{Q-L1bound} 
		\|\mathbb{C}[g] - \mathbb{C}[h]\|_{L^1_m} 
		\lesssim \|g - h\|_{m+2},
	\end{equation}
	where the constants depend only on $M$ and $m$.
\end{lemma}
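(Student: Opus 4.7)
My plan is to follow the structure of Lemma \ref{Propo:C12}, but with one factor in the trilinear form replaced by a $(g-h)$-type difference coming either from the trilinear part or from the Laurentian. Writing $\mathcal{T}(f) := f_1 f_2 - f f_1 - f f_2$ for the factor appearing in the first piece of \eqref{def-Qf} (the other two pieces reduce to this one by the $\p \leftrightarrow \p_1$, $\p \leftrightarrow \p_2$ relabellings already used in Lemma \ref{Lemma:WeakFormulation}), I would decompose the difference by telescoping:
\[
\mathcal{L}_g \mathcal{T}(g) - \mathcal{L}_h \mathcal{T}(h) = \mathcal{L}_h \big(\mathcal{T}(g)-\mathcal{T}(h)\big) + \big(\mathcal{L}_g-\mathcal{L}_h\big)\mathcal{T}(g).
\]
The two key pointwise bounds are: (i) the uniform bound $\mathcal{L}_h(\Delta)\le 1/(3\mathfrak{c}_1\mathfrak{c}_2)$, which follows since $\Gamma^h \ge 3\mathfrak{c}_1\mathfrak{c}_2>0$ (the $\max$ truncation in the definition of $\Gamma^f$ is what removes the singularity in $\mathcal{L}_f$); and (ii), for the Laurentian difference,
\[
|\mathcal{L}_g-\mathcal{L}_h| \le \frac{|\Gamma^g-\Gamma^h|}{\Gamma^g \Gamma^h} \lesssim \sum_{i\in\{0,1,2\}} \omega_{\p_i}\,|g(\p_i)-h(\p_i)|, \qquad \p_0:=\p,
\]
obtained from the identity $\mathcal{L}_g-\mathcal{L}_h = (\Gamma^g-\Gamma^h)(\Delta^2-\Gamma^g\Gamma^h)/[(\Delta^2+(\Gamma^g)^2)(\Delta^2+(\Gamma^h)^2)]$, the AM-GM estimate $(\Delta^2+(\Gamma^g)^2)(\Delta^2+(\Gamma^h)^2) \ge (\Delta^2+\Gamma^g\Gamma^h)^2$, and the $1$-Lipschitz property of $t \mapsto \max\{t,\mathfrak{c}_2\}$.

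The $\mathcal{L}_h(\mathcal{T}(g)-\mathcal{T}(h))$ contribution is bilinear after expanding $g_1 g_2 - h_1 h_2 = (g_1-h_1)g_2 + h_1(g_2-h_2)$ (and analogously for the $f f_1, f f_2$ terms). Combined with $|V|^2\mathcal{L}_h \lesssim \omega_\p^2 + \omega_{\p_1}^2 + \omega_{\p_2}^2$, it fits exactly the template of Steps 2--3 in the proof of Lemma \ref{Propo:C12}: I would integrate out the delta $\delta(\p-\p_1-\p_2)$, use $\omega_\p^m \lesssim \omega_{\p_1}^m + \omega_{\p_2}^m$, and distribute the weights so that $|g-h|$ is paired with $\omega^{m+2}$ (yielding the $\|g-h\|_{m+2}$ factor) while the other factor is paired with a weight at most $\omega^{m+2}$ (producing a moment bounded by $M$). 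The net contribution of this piece is of order $M\|g-h\|_{m+2}$.

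The main obstacle is the $(\mathcal{L}_g-\mathcal{L}_h)\mathcal{T}(g)$ piece, which after applying the Laurentian difference bound becomes a \emph{trilinear} integrand with one $|g-h|$ factor and two factors of $g$, evaluated at the three dependent momenta. I would treat it by enumerating the $3\times 3 = 9$ subterms (three choices of $i\in\{0,1,2\}$ from the Laurentian bound, three terms from $\mathcal{T}(g)=g_1g_2-gg_1-gg_2$), and in each one use $\delta(\p-\p_1-\p_2)$ to eliminate $\p$, followed by a suitable $\p \leftrightarrow \p_i$ swap together with $\omega_\p^m \lesssim \omega_{\p_1}^m + \omega_{\p_2}^m$, in order to arrange that the $|g-h|$ factor is integrated against the weight $\omega^{m+2}$ while each of the two remaining factors is integrated against a weight of order at most $\omega^{m+2}$. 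The delicate part is precisely this bookkeeping: for each of the nine subterms the correct relabelling must be chosen so that the single $|g-h|$ factor absorbs the full weight $\omega^{m+2}$ (giving $\|g-h\|_{m+2}$) and the two moment factors each receive at most $\omega^{m+2}$ (giving $M \cdot M$). Summing all contributions from the nine subterms, and from the other two pieces of \eqref{def-Qf}, yields the claimed Lipschitz estimate \eqref{Q-L1bound}.
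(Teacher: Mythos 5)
Your overall decomposition (telescoping, the uniform lower bound $\mathcal{L}_h \le 1/(3\mathfrak{c}_1\mathfrak{c}_2)$, the AM-GM estimate, and the $1$-Lipschitz property of $t \mapsto \max\{t,\mathfrak{c}_2\}$) is exactly what the paper uses, so the bilinear part $\mathcal{L}_h(\mathcal{T}(g)-\mathcal{T}(h))$ is handled correctly. The gap is in how you treat the Laurentian-difference piece. Your pointwise bound
\[
|\mathcal{L}_g-\mathcal{L}_h| \le \frac{|\Gamma^g-\Gamma^h|}{\Gamma^g\Gamma^h} \lesssim \sum_{i} \omega_{\p_i}|g(\p_i)-h(\p_i)|
\]
already replaces $1/\Gamma^h$ by a constant, and that is precisely where the argument fails: the resulting integrand really is trilinear, but since the collision constraint leaves only two free momenta, at least two of the three factors are collocated. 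For instance, the $i=1$ Laurentian subterm times the $h_1h_2$ term produces $|h_1|\,|g_1-h_1|$ as a factor at the single momentum $\p_1$. No amount of weight bookkeeping or relabelling will bound $\int |h_1|\,|g_1-h_1|\,\omega_{\p_1}^{m+2}\,d\p_1$ in terms of $\|h\|_{L^1_{m+2}}$ and $\|g-h\|_{L^1_{m+2}}$; one would need $L^\infty$ or $L^2$ control on $h$, which is not available. The same collocation occurs for every one of the nine subterms.

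What rescues the paper's proof is that they do \emph{not} replace $1/\Gamma^h$ by a constant. Instead, they keep $\Gamma^h$ and use the $h$-dependent lower bound
\[
\Gamma^h_{\p,\p_1,\p_2} \ge \gamma_{\p_1} \ge \mathfrak{c}_1\,\omega_{\p_1}\,h_1 \gtrsim |\p_1|\,h_1,
\]
together with $\Gamma^g \ge \mathfrak{c}_1\mathfrak{c}_2$. Thus in the subterm $|h_1h_2|\,|\p_1|\,|g_1-h_1|/(\Gamma^g\Gamma^h)$, the factor $|h_1|\,|\p_1|$ in the numerator is absorbed by the $\Gamma^h$ in the denominator, leaving the genuinely bilinear quantity $|h_2|\,|g_1-h_1|$, which is integrable with the correct moments. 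The same cancellation handles the other subterms (for the $i=0$ term one first uses $|\p|\le|\p_1|+|\p_2|$ and then cancels against $\gamma_{\p_1}$ or $\gamma_{\p_2}$). This cancellation is the essential structural ingredient your proposal is missing: it is why $\Gamma^f$ must be allowed to grow with $f$, and why the trilinear-looking contribution is secretly bilinear.
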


\begin{proof} 
We first compute the difference between $\mathbb{C}[g]$ and $\mathbb{C}[h]$:
\[
\mathbb{C}[g] - \mathbb{C}[h] = 
\iint_{\mathbb{R}^{2d}} \Big[ \mathcal{N}_{\p,\p_1,\p_2}[g] - \mathcal{N}_{\p,\p_1,\p_2}[h] 
- 2 \big( \mathcal{N}_{\p_1,\p,\p_2}[g] - \mathcal{N}_{\p_1,\p,\p_2}[h] \big) \Big] d\p_1 d\p_2,
\]
and its $L^1_m$-norm:
\[
\begin{aligned}
	\|\mathbb{C}[g]-\mathbb{C}[h]\|_{L^1_m} &= \int_{\mathbb{R}^d} \cE_\p^m \, |\mathbb{C}[g](\p) - \mathbb{C}[h](\p)| \, d\p \\
	&\le \iiint_{\mathbb{R}^{3d}} \cE_\p^m \, |\mathcal{N}_{\p,\p_1,\p_2}[g] - \mathcal{N}_{\p,\p_1,\p_2}[h]| \, d\p \, d\p_1 \, d\p_2 \\
	&\quad + 2 \iiint_{\mathbb{R}^{3d}} \cE_\p^m \, |\mathcal{N}_{\p_1,\p,\p_2}[g] - \mathcal{N}_{\p_1,\p,\p_2}[h]| \, d\p \, d\p_1 \, d\p_2 \\
	&= \iiint_{\mathbb{R}^{3d}} |\mathcal{N}_{\p,\p_1,\p_2}[g] - \mathcal{N}_{\p,\p_1,\p_2}[h]| \, \big( \cE_\p^m + \cE_{\p_1}^m + \cE_{\p_2}^m \big) \, d\p \, d\p_1 \, d\p_2.
\end{aligned}
\]

Therefore, we obtain the following estimate:
\begin{equation}\label{Lemma:Holder:E1} 
	\|\mathbb{C}[g]-\mathbb{C}[h]\|_{L^1_m} \le \mathcal{D}_1 + \mathcal{D}_2,
\end{equation}
where
\begin{equation}\label{Lemma:Holder:E2} 
	\begin{aligned}
		\mathcal{D}_1 &:= \iiint_{\mathbb{R}^{3d}} |V_{\p,\p_1,\p_2}|^2 \, \delta(\p-\p_1-\p_2) \, \Big| \mathcal{L}_g(\omega_\p - \omega_{\p_1} - \omega_{\p_2}) g_1 g_2 \\
		&\quad - \mathcal{L}_h(\omega_\p - \omega_{\p_1} - \omega_{\p_2}) h_1 h_2 \Big| \, \big( \cE_\p^m + \cE_{\p_1}^m + \cE_{\p_2}^m \big) \, d\p \, d\p_1 \, d\p_2, \\
		\mathcal{D}_2 &:= 2 \iiint_{\mathbb{R}^{3d}} |V_{\p_1,\p,\p_2}|^2 \, \delta(\p_1 - \p - \p_2) \, \Big| \mathcal{L}_g(\omega_{\p_1} - \omega_{\p} - \omega_{\p_2}) g g_2 \\
		&\quad - \mathcal{L}_h(\omega_{\p_1} - \omega_{\p} - \omega_{\p_2}) h h_2 \Big| \, \big( \cE_\p^m + \cE_{\p_1}^m + \cE_{\p_2}^m \big) \, d\p \, d\p_1 \, d\p_2.
	\end{aligned}
\end{equation}

{\it Estimating $\mathcal{D}_1$.}

Set the quantity  inside the triple integral of $\mathcal{D}_1$ after dropping $\Big( \cE_\p^m + \cE_{\p_1}^m + \cE_{\p_2}^m\Big)$ to be $\mathbb{D}_1$
\begin{equation*}
	\begin{aligned}
		\mathbb{D}_1 \ := &\  |V_{\p,\p_1,\p_2}|^2\delta(\p-\p_1-\p_2) \Big|\mathcal{L}_g(\omega_\p -\omega_{\p_1}-\omega_{\p_2})g_1g_2-\mathcal{L}_h(\omega_\p -\omega_{\p_1}-\omega_{\p_2})h_1h_2\Big|,
	\end{aligned}
\end{equation*}
which can be bounded as, using the triangle inequality,
\begin{equation*}
	\begin{aligned}
		\mathbb{D}_1 \ \le &\  |V_{\p,\p_1,\p_2}|^2\delta(\p-\p_1-\p_2) \mathcal{L}_g(\omega_\p -\omega_{\p_1}-\omega_{\p_2})|g_1g_2-h_1h_2|\\
		\  &\ + |V_{\p,\p_1,\p_2}|^2\delta(\p-\p_1-\p_2)\Big| \mathcal{L}_g(\omega_\p -\omega_{\p_1}-\omega_{\p_2})-\mathcal{L}_h(\omega_\p -\omega_{\p_1}-\omega_{\p_2})\Big||h_1h_2|\\
		=: &\ \mathbb{D}_{11}+\mathbb{D}_{12}.
	\end{aligned}
\end{equation*}
Let us now study $\mathbb{D}_{11}$ in details. Using   the triangle inequality
$$|g_1g_2-h_1h_2|\le |g_1||g_2-h_2|+|h_2||g_1-h_1|,$$
yields 
$$
\begin{aligned}
	\mathbb{D}_{11}\ \lesssim 
	& \ \delta(\p-\p_1-\p_2)
	(|\p|+|\p_1|+|\p_2|)^2
	\mathcal{L}_g(\omega_\p -\omega_{\p_1}-\omega_{\p_2})|g_1||g_2-h_2|\\
	\ &\ + \delta(\p-\p_1-\p_2)
	(|\p|+|\p_1|+|\p_2|)^2
	\mathcal{L}_g(\omega_\p -\omega_{\p_1}-\omega_{\p_2})|h_2||g_1-h_1|\\
	\lesssim 
	& \ \delta(\p-\p_1-\p_2) 
	(|\p|+|\p_1|+|\p_2|)^2 \Big[
	|g_1| |g_2 - h_2 |
	+ |h_2 | |g_1 - h_1 | \Big]
	\text{.}
\end{aligned}
$$
Here, the estimate 
\begin{align*} 
	\mathcal{L}_g (\omega_\p - \omega_{\p_1}-\omega_{\p_2})
	\leq \frac{1}{\gamma_\p + \gamma_{\p_1} + \gamma_{\p_2}}
	\leq \frac{1}{3\mathfrak{c}_1 \mathfrak{c}_2}
\end{align*}
was used.

Multiplying the above inequality with $\Big( \cE_\p^m + \cE_{\p_1}^m + \cE_{\p_2}^m\Big)$ and integrating in $\p$, $\p_1$ and $\p_2$, we obtain 
$$
\begin{aligned}
	&\ \iiint_{\mathbb{R}^{3d}}\mathbb{D}_{11}\Big( \cE_\p^m + \cE_{\p_1}^m + \cE_{\p_2}^m\Big)d\p d\p_1d\p_2\\
	\ \lesssim &  \ \iiint_{\mathbb{R}^{3d}}{\delta(\p-\p_1-\p_2)}
	(|\p|+|\p_1|+|\p_2|)^2
	\left[|g_1||g_2-h_2| + |h_2||g_1-h_1|\right]\\
	&\times\Big( \cE_\p^m + \cE_{\p_1}^m + \cE_{\p_2}^m\Big)d\p d\p_1d\p_2.
\end{aligned}
$$
By the resonant condition  $\p = \p_1 + \p_2$, 
$$
\begin{aligned}
	&\ \iiint_{\mathbb{R}^3}\mathbb{D}_{11}\Big( \cE_\p^m + \cE_{\p_1}^m + \cE_{\p_2}^m\Big)d\p d\p_1d\p_2\\
	\ \lesssim 
	& \ \iint_{\mathbb{R}^{2d}}
	(|\p_1+\p_2|+|\p_1|+|\p_2|)^2
	\left[|g_1||g_2-h_2| + |h_2||g_1-h_1|\right]
	\Big(\cE_{\p_1}^m + \cE_{\p_2}^m\Big)d\p_1d\p_2,
\end{aligned}
$$
where the inequality $\cE_{\p_1+\p_2}^m\lesssim \cE_{\p_1}^m + \cE_{\p_2}^m\text{,}$
proved in Proposition \ref{Propo:C12}, was used to bound $\cE_{\p}^m + \cE_{\p_1}^m + \cE_{\p_2}^m$ by 
$C\left(\cE_{\p_1}^m + \cE_{\p_2}^m\right)$.
Since 
\begin{align*} 
	(|\p_1+\p_2|+|\p_1|+|\p_2|)^2 (\omega_{\p_1}^m+\omega_{\p_2}^m)
	&\lesssim (|\p_1|^2+|\p_2|^2)(\omega_{\p_1}^m+\omega_{\p_2}^m)\\
	&\lesssim (\omega_{\p_1}^2+\omega_{\p_2}^2)(\omega_{\p_1}^m+\omega_{\p_2}^m)
	\lesssim \omega_{\p_1}^{m+2}+\omega_{\p_2}^{m+2}
\end{align*}
as in the proof of \eqref{Propo:C12:E3}, we find
$$
\begin{aligned}
	&\ \iiint_{\mathbb{R}^{3d}}\mathbb{D}_{11}\Big( \cE_\p^m + \cE_{\p_1}^m + \cE_{\p_2}^m\Big)d\p d\p_1d\p_2\\
	\ \lesssim & \ \iint_{\mathbb{R}^{2d}}\left[|g_1||g_2-h_2|
	+ |h_2||g_1-h_1|\right]\Big(\cE_{\p_1}^{m+2}+\cE_{\p_2}^{m+2}\Big)d\p_1d\p_2,
\end{aligned}
$$
which immediately yields
\begin{equation}\label{Lemma:Holder:E3} 
	\begin{aligned}
		&\ \iiint_{\mathbb{R}^{3d}}\mathbb{D}_{11}\Big( \cE_\p^m + \cE_{\p_1}^m + \cE_{\p_2}^m\Big)d\p d\p_1d\p_2\\
		\ \lesssim & \|g-h\|_{L^1_{m+2}}\left(\|g\|_{L^1}+\|g\|_{L^1_{m+2}}+\|h\|_{L^1}+\|h\|_{L^1_{m+2}}\right)\\
		\ \lesssim & \|g-h\|_{L^1_{m+2}}\left(\|g\|_{L^1_{m+2}}+\|h\|_{L^1_{m+2}}\right).
	\end{aligned}
\end{equation} 
Now, let us look at $\mathbb{D}_{12}$, which can be written for 
$\Delta=\omega_\p-\omega_{\p_1}-\omega_{\p_2}$ as
$$
\begin{aligned}
	\mathbb{D}_{12} \ = & \ \delta(\p-\p_1-\p_2)
	(|\p|+|\p_1|+|\p_2|)^2
	|h_1h_2| \times\left|
		\mathcal{L}_g(\Delta)-\mathcal{L}_h(\Delta)
	\right|\\
	= & \ \delta(\p-\p_1-\p_2)
	(|\p|+|\p_1|+|\p_2|)^2
	|h_1h_2|\\ &\times\left|
		\frac{\Gamma^g_{\p,\p_1,\p_2}}{\Delta^2+(\Gamma^g_{\p,\p_2,\p_2})^2} -
		\frac{\Gamma^h_{\p,\p_1,\p_2}}{\Delta^2+(\Gamma^h_{\p,\p_2,\p_2})^2}
	\right| \\
	= & \ \delta(\p-\p_1-\p_2)
	(|\p|+|\p_1|+|\p_2|)^2
	|h_1h_2|\\ &\times\left|
		\frac{
			(\Gamma^g_{\p,\p_1,\p_2}-\Gamma^h_{\p,\p_1,\p_2})(\Delta^2-\Gamma^g_{\p,\p_1,\p_2}\Gamma^h_{\p,\p_1,\p_2})}{
			(\Delta^2+(\Gamma^g_{\p,\p_1,\p_2})^2)(\Delta^2+(\Gamma^h_{\p,\p_1,\p_2})^2)
		}
	\right|
	\text{.}
\end{aligned}
$$

It follows from the Cauchy-Schwarz inequality that
$$
(\Delta^2 + (\Gamma^g_{\p,\p_1,\p_2})^2)(\Delta^2+(\Gamma^h_{\p,\p_1,\p_2})^2)
\geq (\Delta^2+\Gamma^g_{\p,\p_1,\p_2}\Gamma^h_{\p,\p_1,\p_2})
|\Delta^2-\Gamma^g_{\p,\p_1,\p_2}\Gamma^h_{\p,\p_1,\p_2}|
\text{,}
$$
from which we obtain the following estimate on $\mathbb{D}_{12}$
$$
\begin{aligned}
	\mathbb{D}_{12} 
	&\leq \delta(\p-\p_1-\p_2)|h_1 h_2|
	(|\p|+|\p_1|+|\p_2|)^2
	\frac{|\Gamma^g_{\p,\p_1,\p_2}-\Gamma^h_{\p,\p_1,\p_2}|}{
		\Delta^2+\Gamma^g_{\p,\p_1,\p_2}\Gamma^h_{\p,\p_1,\p_2}
	}\\
	&\leq \delta(\p-\p_1-\p_2)|h_1 h_2|
	(|\p|+|\p_1|+|\p_2|)^2
	\frac{|\Gamma^g_{\p,\p_1,\p_2}-\Gamma^h_{\p,\p_1,\p_2}|}{
		\Gamma^g_{\p,\p_1,\p_2}\Gamma^h_{\p,\p_1,\p_2}
	}
	\text{.}
\end{aligned}
$$
As $x\mapsto \max\{x,\mathfrak{c}_2\}$ is 1-Lipschitz, 
the numerator can be bounded as 
\begin{equation*}
\begin{aligned}
	|\Gamma^g_{\p,\p_1,\p_2}-\Gamma^h_{\p,\p_1,\p_2}|
	\leq&\mathfrak{c}_1\Big(
		|\max\{|\p|g,\mathfrak{c}_2\}-\max\{|\p|h,\mathfrak{c}_2\}|\\
		&+|\max\{|\p_1|g_1,\mathfrak{c}_2\}-\max\{|\p_1|h_1,\mathfrak{c}_2\}| \\
		&+|\max\{|\p_2|g_2,\mathfrak{c}_2\}-\max\{|\p_2|h_2,\mathfrak{c}_2\}|
	\Big) \\ 
	&\leq \mathfrak{c}_1 \left(
		|\p| |g-h| + |\p_1||g_1-h_1|+|\p_2||g_2-h_2|
	\right)
	\text{,}
\end{aligned}
\end{equation*}
yielding an upper bound for $\mathbb{D}_{12}$:
\begin{equation}\label{Lemma:Hoelder:ED12}
\begin{aligned}
	\mathbb{D}_{12}\lesssim & \
	\delta(\p-\p_1-\p_2)|h_1 h_2|
	(|\p|+|\p_1|+|\p_2|)^2
	\frac{
		|\p||g-h|+|\p_1||g_1-h_1|+|\p_2||g_2-h_2|
	}{\Gamma^g_{\p,\p_1,\p_2}\Gamma^h_{\p,\p_1,\p_2}}\\
	=&\delta(\p-\p_1-\p_2)|h_1 h_2|
	(|\p|+|\p_1|+|\p_2|)^2
	\frac{|\p||g-h|}{\Gamma^g_{\p,\p_1,\p_2}\Gamma^h_{\p,\p_1,\p_2}}\\
	&+\delta(\p-\p_1-\p_2)|h_1 h_2|
	(|\p|+|\p_1|+|\p_2|)^2
	\frac{|\p_1||g_1-h_1|}{\Gamma^g_{\p,\p_1,\p_2}\Gamma^h_{\p,\p_1,\p_2}}\\
	&+\delta(\p-\p_1-\p_2)|h_1 h_2|
	(|\p|+|\p_1|+|\p_2|)^2
	\frac{|\p_2||g_2-h_2|}{\Gamma^g_{\p,\p_1,\p_2}\Gamma^h_{\p,\p_1,\p_2}}\\
	=:& \mathbb{D}_{120}+\mathbb{D}_{121}+\mathbb{D}_{122}
	\text{.}
\end{aligned}
\end{equation}
Now, we split $\mathbb{D}_{12}$ using \eqref{Lemma:Hoelder:ED12}
and estimate the integrals of these terms separately.
Starting with $\mathbb{D}_{121}$,
We integrate by $\p$ and use the resonant condition $\p=\p_1+\p_2$ to obtain 
\begin{equation*}
\begin{aligned}
	&\iiint_{\mathbb{R}^{3d}} \mathbb{D}_{121}(\omega_\p^m+\omega_{\p_1}^m+\omega_{\p_2}^m)
	\,d\p d\p_1 d\p_2 \\
	&= \iint_{\mathbb{R}^{2d}} (|\p_1+\p_2|+|\p_1|+|\p_2|)^2 |h_1 h_2| 
	\frac{|\p_1||g_1-h_1|}{\Gamma^h_{\p,\p_1,\p_2}\Gamma^g_{\p,\p_1,\p_2}}
	(\omega_{\p_1+\p_2}^m+\omega_{\p_1}^m+\omega_{\p_2}^m)
	\, d\p_1 d\p_2 
	\text{.}
\end{aligned}
\end{equation*}
Using the estimates 
\begin{equation}\label{Lemma:Hoelder:E3-0}
\begin{aligned}
	\Gamma^g_{\p,\p_1,\p_2}\geq \gamma_\p \geq \mathfrak{c}_1\mathfrak{c}_2 
	\text{,}\quad 
	\Gamma^h_{\p,\p_1,\p_2}\geq \gamma_{\p_1}\geq |h_1||\p_1|
\end{aligned}
\end{equation}
yields 
\begin{equation}\label{Lemma:Hoelder:E3-1}
\begin{aligned}
	&\iiint_{\mathbb{R}^{3d}} \mathbb{D}_{121}(\omega_\p^m+\omega_{\p_1}^m+\omega_{\p_2}^m)
	\,d\p d\p_1 d\p_2 \\
	&\lesssim \iint_{\mathbb{R}^{2d}} (|\p_1+\p_2|+|\p_1|+|\p_2|)^2 |h_2||g_1-h_1| 
	(\omega_\p^m+\omega_{\p_1}^m+\omega_{\p_2}^m) \, d\p_1 d\p_2 
	\text{.}
\end{aligned}
\end{equation}
Similarly, we also obtain 
\begin{equation}\label{Lemma:Hoelder:E3-2}
\begin{aligned}
	&\iiint_{\mathbb{R}^{3d}} \mathbb{D}_{122}(\omega_\p^m+\omega_{\p_1}^m+\omega_{\p_2}^m)
	\,d\p d\p_1 d\p_2 \\
	&\lesssim \iint_{\mathbb{R}^{2d}} (|\p_1+\p_2|+|\p_1|+|\p_2|)^2 |h_1||g_2-h_2| 
	(\omega_\p^m+\omega_{\p_1}^m+\omega_{\p_2}^m) \, d\p_1 d\p_2 
	\text{.}
\end{aligned}
\end{equation}
Combining \eqref{Lemma:Hoelder:E3-1}, \eqref{Lemma:Hoelder:E3-2} and applying the same procedure
used to derive \eqref{Lemma:Holder:E3} leads to 
\begin{equation}\label{Lemma:Hoelder:E3-3}
\begin{aligned}
	\iiint_{\mathbb{R}^{3d}} & \ (\mathbb{D}_{121}+\mathbb{D}_{122})
	(\omega_\p^m+\omega_{\p_1}^m+\omega_{\p_2}^m)\, d\p d\p_2 d\p_2  \\
	& \lesssim \|g-h\|_{L_{m+2}^1} \left( \|g\|_{L_{m+2}^1}+\|h\|_{L_{m+2}^1}
	\right)
\end{aligned}
\end{equation}

Now we estimate the integral containing $\mathbb{D}_{120}$.
Using the resonant condition $\p=\p_1+\p_2$, 
\begin{equation}\label{Lemma:Hoelder:E3-3-0}
\begin{aligned}
	\iiint_{\mathbb{R}^{3d}} \ & \mathbb{D}_{120}(\omega_\p^m+\omega_{\p_1}^m+\omega_{\p_2}^m)
	\,d\p d\p_1 d\p_2 \\
	=& \iiint_{\mathbb{R}^{3d}} (|\p|+|\p_1|+|\p_2|)^2 
	\delta(\p-\p_1-\p_2)|h_1 h_2| 
	\frac{|\p_1+\p_2| |g-h|}{\Gamma^g_{\p,\p_1,\p_2}\Gamma^h_{\p,\p_1,\p_2}} \\
	&\times (\omega_\p^m+\omega_{\p_1}^m+\omega_{\p_2}^m) \,d\p d\p_1 d\p_2 \\
	\leq & \ \iiint_{\mathbb{R}^{3d}} (|\p|+|\p_1|+|\p_2|)^2 
	\delta(\p-\p_1-\p_2)|h_1 h_2| 
	\frac{|\p_1| |g-h|}{\Gamma^g_{\p,\p_1,\p_2}\Gamma^h_{\p,\p_1,\p_2}} \\
	&\times (\omega_\p^m+\omega_{\p_1}^m+\omega_{\p_2}^m) \,d\p d\p_1 d\p_2 \\
	+ & \ \iiint_{\mathbb{R}^{3d}} (|\p|+|\p_1|+|\p_2|)^2 
	\delta(\p-\p_1-\p_2)|h_1 h_2| 
	\frac{|\p_2| |g-h|}{\Gamma^g_{\p,\p_1,\p_2}\Gamma^h_{\p,\p_1,\p_2}} \\
	&\times (\omega_\p^m+\omega_{\p_1}^m+\omega_{\p_2}^m) \,d\p d\p_1 d\p_2 \\
	=:& \mathcal{D}_{11}' + \mathcal{D}_{12}'
	\text{.}
\end{aligned}
\end{equation}
To estimate $\mathcal{D}_{11}'$, 
we integrate in $\p_1$ and use \eqref{Lemma:Hoelder:E3-0} to get,
following the proof of \eqref{Lemma:Hoelder:E3-3},
\begin{equation}\label{Lemma:Hoelder:E3-3-1}
\begin{aligned}
	\mathcal{D}_{11}'
	&\lesssim \iint_{\mathbb{R}^{2d}} (|\p|^2+|\p_2|^2) |h_2| |g-h|
	(\omega_\p^m + \omega_{\p_2}^m)\, d\p d\p_2 \\ 
	& \lesssim \|g-h\|_{L_{m+2}^1} \left( \|g\|_{L_{m+2}^1}+\|h\|_{L_{m+2}^1}
	\right)
	\text{.}
\end{aligned}
\end{equation}
Similarly, we obtain 
\begin{equation}\label{Lemma:Hoelder:E3-3-2}
\begin{aligned}
	\mathcal{D}_{12}'
	&\lesssim \iint_{\mathbb{R}^{2d}} (|\p|^2+|\p_1|^2) |h_2| |g-h|
	(\omega_\p^m + \omega_{\p_1}^m)\, d\p d\p_1 \\ 
	& \lesssim \|g-h\|_{L_{m+2}^1} \left( \|g\|_{L_{m+2}^1}+\|h\|_{L_{m+2}^1}
	\right)
	\text{.}
\end{aligned}
\end{equation}

Combining \eqref{Lemma:Holder:E3}, \eqref{Lemma:Hoelder:ED12}
and \eqref{Lemma:Hoelder:E3-3}--\eqref{Lemma:Hoelder:E3-3-2} yields
\begin{equation}\label{Lemma:Holder:E5}
\begin{aligned}
	\mathcal{D}_1\  \lesssim &\ \|g-h\|_{L^1_{m+2}},
\end{aligned}
\end{equation}
where the constant in the above inequality depends on 
$\left(\|g\|_{L^1_{m+1}}+\|h\|_{L^1_{m+1}}\right)$.

{\it Estimating $\mathcal{D}_2$.}

The proof of estimating $\mathcal{D}_2$ follows exactly the same argument used in the previous estimate. We omit some details and give only the main estimates in the sequel. First, define the quantity  inside the triple integral of $\mathcal{D}_2$ after dropping $\Big( \cE_\p^m + \cE_{\p_1}^m + \cE_{\p_2}^m\Big)$ to be $\mathbb{D}_2$.
\begin{equation*}
	\begin{aligned}
		\mathbb{D}_2 \ := &\  |V_{\p_1,\p,\p_2}|^2\delta(\p_1-\p-\p_2) \Big|\mathcal{L}_g(\omega_{\p_1} -\omega_{\p}-\omega_{\p_2})gg_2-\mathcal{L}_h(\omega_{\p_1} -\omega_{\p}-\omega_{\p_2})hh_2\Big|,
	\end{aligned}
\end{equation*}
which, by the triangle inequality, can be bounded as
\begin{equation*}
	\begin{aligned}
		\mathbb{D}_2 \ \lesssim &\  |V_{\p_1,\p,\p_2}|^2\delta(\p_1-\p-\p_2) \mathcal{L}_g(\omega_{\p_1} -\omega_{\p}-\omega_{\p_2})|gg_2-hh_2|\\
		\  &\ + |V_{\p_1,\p,\p_2}|^2\delta(\p_1-\p-\p_2)\Big| \mathcal{L}_g(\omega_{\p_1} -\omega_{\p}-\omega_{\p_2})-\mathcal{L}_h(\omega_{\p_1} -\omega_{\p}-\omega_{\p_2})\Big||hh_2|.
	\end{aligned}
\end{equation*}
Define the two terms on the right hand side of the above inequality to be $\mathbb{D}_{21}$ and $\mathbb{D}_{22}$, respectively.

The same argument used in Step 1 can be employed, implying the following estimate: 
$$ 
\begin{aligned}
	\mathbb{D}_{21} \ \lesssim & \ {\delta(\p-\p_1-\p_2)}
	\left(|\p|+|\p_1|+|\p_2|\right)^2
	\left( |g||g_2-h_2| + |h_2||g-h| \right)
	\text{.}
\end{aligned}
$$
Multiplying the above by $\Big( \cE_\p^m + \cE_{\p_1}^m + \cE_{\p_2}^m\Big)$ and integrating in 
$\p$, $\p_1$ and $\p_2$ yields
\begin{equation}\label{Lemma:Holder:E6} 
	\begin{aligned}
		&\ \iiint_{\mathbb{R}^{3d}}\mathbb{D}_{21}\Big( \cE_\p^m + \cE_{\p_1}^m + \cE_{\p_2}^m\Big)d\p d\p_1d\p_2\\
		\ \lesssim & \ \left(\|g-h\|_{L^1}+\|g-h\|_{L^1_{m+2}}\right),
	\end{aligned}
\end{equation} 
where the constant depends on $\left(\|g\|_{L^1_{m+2}}+\|h\|_{L^1_{m+2}}\right)$.

Now, similar to $\mathbb{D}_{12}$, $\mathbb{D}_{22}$  can be bounded as
$$
\begin{aligned}
	\mathbb{D}_{22} 
	\ \lesssim & \ |hh_2|\delta(\p_1-\p-\p_2)\frac{|\Gamma^g_{\p,\p_1,\p_2}-\Gamma^h_{\p,\p_1,\p_2}|}{\Gamma^g_{\p,\p_1,\p_2}\Gamma^h_{\p,\p_1,\p_2}}.
\end{aligned}
$$
The same argument used in \eqref{Lemma:Holder:E3} can be applied and we then obtain
\begin{equation}\label{Lemma:Holder:E7}
	\begin{aligned}
		&\ \iiint_{\mathbb{R}^{3d}}\mathbb{D}_{22} \Big( \cE_\p^m+ \cE_{\p_1}^m + \cE_{\p_2}^m\Big)d\p d\p_1d\p_2
		\lesssim \left(\|g-h\|_{L^1}+\|g-h\|_{L^1_{m+2}}\right),
	\end{aligned}
\end{equation}
where the constant depends on $\left(\|g\|_{L^1_{m+2}}+\|h\|_{L^1_{m+2}}\right)$.
\\
Combining \eqref{Lemma:Holder:E6} and \eqref{Lemma:Holder:E7} yields
\begin{equation}\label{Lemma:Holder:E8}
	\begin{aligned}
		\mathcal{D}_2\  \lesssim \|g-h\|_{L^1}+\|g-h\|_{L^1_{m+2}}\lesssim  \|g-h\|_{L^1_{m+2}}.
	\end{aligned}
\end{equation}
Putting the two estimates \eqref{Lemma:Holder:E5} and \eqref{Lemma:Holder:E8} together with \eqref{Lemma:Holder:E1} and \eqref{Lemma:Holder:E2}, the conclusion of the Lemma then follows. 
\end{proof}

\begin{proof}[Proof of Proposition \ref{Propo:HolderC12}] 
The proposition now follows straightforwardly from the previous lemma. 
Indeed, by the boundedness of $g,h$ in $L^1_1 \cap L^1_{m+2}$, we obtain 
$$
\begin{aligned}
	\| g-h\|_{L^1_{m+2}}  &\le \| g-h\|_{L^1_{m+2}}^{\frac12} 
	\left(\|g\|_{L^1_{m+2}}+\|h\|_{L^1_{m+2}}\right)^{\frac12}
	\lesssim \| g-h\|_{L^1_{m+2}}^{\frac12} 
	\text{.}
\end{aligned}$$
Therefore, we have 
$$\|\mathbb{C}[g]-\mathbb{C}[h]\|_{L^1_m}  \lesssim\| g-h\|_{L^1_{m+2}}^{\frac12} $$
which holds for all $m\geq 0 $. The proposition follows. 
\end{proof}

\section{Proof of Theorem \ref{Theorem:Main}}\label{Sec:Main}
We shall apply Theorem \ref{Theorem:ODE} to \eqref{WeakTurbulenceInitial}, which can be written as
\[
\partial_t f = \mathbb{Q}[f], \qquad \mathbb{Q}[f] := \mathbb{C}[f] - 2\nu |\p|^\gamma f.
\]
Fix $m>1$, and define the Banach spaces $\mathfrak{E} = L^1_m(\mathbb{R}^d)$ and $\mathfrak{F} = L^1_{m+3}(\mathbb{R}^d)$, endowed with the norms
\[
\|f\|_{\mathfrak{E}} := \|f\|_{L^1_m}, \qquad \|f\|_* := \|f\|_{L^1_{m+3}}.
\]
We also define
\[
|f|_* := \mathcal{M}_{m+3}[f].
\]
Then we have
\[
|f|_* \le \|f\|_*, \quad \forall f \in \mathfrak{F}, \qquad |f+g|_* \le |f|_* + |g|_*, \quad \forall f,g \in \mathfrak{F},
\]
\[
\Lambda |f|_* = |\Lambda f|_*, \quad \forall f \in \mathfrak{F}, \Lambda \in \mathbb{R}_+,
\]
and
\[
|f|_* = \|f\|_{L^1_{m+3}}, \quad \forall f \in \Omega_T.
\]
Moreover, condition \eqref{LesbegueDominated} is automatically satisfied due to the Lebesgue dominated convergence theorem and Theorem 1.2.7 in \cite{BadialeSerra:SEE:2011}.

Clearly, $\Omega_T$ is a bounded and closed set with respect to the norm $\|\cdot\|_*$. By Proposition \ref{Propo:MomentsPropa}, for $f_0 \in \Omega_0 \subset \Omega_T$, solutions to \eqref{WeakTurbulenceInitial} remain in $\Omega_T$. Thus, it suffices to verify the three conditions $(\mathscr{A})$, $(\mathscr{B})$, and $(\mathscr{C})$ of Theorem \ref{Theorem:ODE}. Then, Theorem \ref{Theorem:Main} follows as a direct consequence of Theorem \ref{Theorem:ODE}. 

Notice that the continuity condition $(\mathscr{A})$ follows directly from Proposition \ref{Propo:HolderC12}, so it remains to verify $(\mathscr{B})$ and $(\mathscr{C})$.

\subsection{Condition $(\mathscr{B})$: Subtangent condition}\label{Subtangent}

Let $f$ be an arbitrary element of the set $\Omega_T$. It suffices to prove the following claim: for all $\epsilon>0$, there exists $h_*>0$, depending on $f$ and $\epsilon$, such that 
\begin{equation}\label{claim}
	B(f + h \mathbb{Q}[f], h\epsilon) \cap \Omega_T \neq \emptyset, \qquad 0 < h < h_*.
\end{equation}

For $R>0$, let $\chi_R(\p)$ denote the characteristic function of the ball $B(0,R)$, and define  
\begin{equation}\label{def-wR} 
	w_R := f + h \mathbb{Q}[f_R], \qquad f_R(\p) := \chi_R(\p) f(\p)
	\text{.}
\end{equation}
We shall show that for each $R>0$, there exists $h_R>0$
to be determined later such that $w_R \in \Omega_T$ 
for all $0 < h \le h_R$. 

Clearly, $w_R \in L^1(\RR^d) \cap L^1_{m+3}(\RR^d)$,
and we now verify the conditions (S1) and (S2) in \eqref{Cons}.  

~\\
{\bf Condition (S1): Positivity of the set $\Omega_T$.} Note that we can write 
\[
\mathbb{C}[f] = \mathbb{C}_\mathrm{gain}[f] - \mathbb{C}_\mathrm{loss}[f],
\]
with $\mathbb{C}_\mathrm{gain}[f] \ge 0$ and $\mathbb{C}_\mathrm{loss}[f] = f \vartheta[f]$. 
Since $f_R$ is compactly supported, 
it follows from Proposition \ref{Propo:MassLowerBound} that 
$\chi_R \vartheta[f_R]$ is bounded by a universal constant 
$(A_1 R^2+A_2)e^{\widetilde{C}T}$ . 
Here, $A_1$ and $A_2$ depend on $\Lambda_1$, $\Lambda_2$ and $\|f_0\|_{L^1_2}$,
where the norm of $f_0$ is bounded using $\varsigma$ as $f_0\in B_*(0,\varsigma)$.
Hence,
\[
\begin{aligned}
	w_R &= f + h \Big( \mathbb{C}[f_R] - 2 \nu |\p|^\gamma f_R \Big) \\
	&\ge f - h f_R \Big((A_1R^2+A_2)e^{\widetilde{C}T} + 2\nu R^\gamma\Big),
\end{aligned}
\]
which is nonnegative for sufficiently small $h$, specifically
\[
h < \frac{h_R}{2} := \frac{1}{2\left((A_1R^2+A_2)e^{\widetilde{C}T} + 2\nu R^\gamma\right)}.
\]

Let us check \eqref{Coercivity} for $\eta < R$. By Lemma \ref{Propo:C12},
\[
\begin{aligned}
|w_R - f|_* &= h |\mathbb{C}[f_R] - 2\nu |\p|^\gamma f_R|_*  
=h\left(\int_{\mathbb{R}^d} \mathbb{C}[f_R] \ \omega_\p^{m+3}\,d\p 
-2\nu\int_{\mathbb{R}^d} |\p|^\gamma f_R \ \omega_\p^{m+3}\,d\p
\right) \\
&\leq h\left(C \int_{\mathbb{R}^d} f_R \ \omega_\p^{m+5}\,d\p 
-2\nu\int_{\mathbb{R}^d} |\p|^\gamma \omega_\p^{m+3}\,d\p
\right) 
= h\left(\int_{\mathbb{R}^d} f_R \ \omega_\p^{m+3}
\left( C\omega_\p^2-2\nu |\p|^\gamma \right)\,d\p
\right) 
\text{,}
\end{aligned}
\]
where $C\omega_\p^2-2\nu|\p|^\gamma$ is bounded above as in the proof of 
Proposition \ref{Propo:MomentsPropa}. 
This yields 
\begin{equation}\label{K1R0}
	\left|\frac{w_R - f}{h}\right|_* 
	\le \frac{\widetilde{C}}{2}\int_{\mathbb{R}^d} f_R \ \omega_\p^{m+3}\,d\p 
	\leq \frac{\theta_*}{2}\|f\|_* 
	\text{,}
\end{equation}
where $\widetilde{C}$ is the constant from Proposition \ref{Propo:MomentsPropa}.

{\bf Condition (S2): Upper bound of the set $\Omega_T$.} 
By Proposition \ref{Propo:MomentsPropa}, $\|f\|_* < (2\varsigma + 1) e^{\theta_* T}$. 
Since 
\[
\lim_{h \to 0} \|f - w_R\|_* = 0,
\]
we can choose $h_*$ small enough so that for $0 < h < h_*$,
\[
\|w_R\|_* < (2\varsigma + 1) e^{\theta_* T}.
\]

This proves the claim \eqref{claim} and hence verifies condition $(\mathscr{B})$.

\subsection{Condition $(\mathscr{C})$: One-side Lipschitz condition}\label{Lipschitz}

By the Lebesgue dominated convergence theorem, we have
\[
\begin{aligned}
	\big[\varphi, \phi \big] 
	&= \lim_{h \rightarrow 0^-} h^{-1} \big( \| \phi + h \varphi \|_E - \| \phi \|_E \big) \\
	&= \lim_{h \rightarrow 0^-} h^{-1} \int_{\RR^d} \big( |\phi + h \varphi| - |\phi| \big) (\cE_\p + \cE_\p^m) \, d\p \\
	&\le \int_{\RR^d} \varphi(\p) \, \mathrm{sign}(\phi(\p)) (\cE_\p + \cE_\p^m) \, d\p.
\end{aligned}
\]

Recalling that $\mathbb{Q}[f] = \mathbb{C}[f] - 2\nu |\p|^\gamma f$, we estimate
\[
\begin{aligned}
	\big[ \mathbb{Q}[f] - \mathbb{Q}[g], f - g \big] 
	&\le \int_{\RR^d} \big[ \mathbb{Q}[f](\p) - \mathbb{Q}[g](\p) \big] \, \mathrm{sign}((f-g)(\p)) \, \cE_\p^m \, d\p \\
	&\le\|\mathbb{C}[f]-\mathbb{C}[g]\|_{\mathfrak{E}}-2\nu\|\, |\p|^\gamma (f-g)\,\|_{\mathfrak{E}}.
\end{aligned}
\]

Using Lemma \ref{Lemma:Holder} and recalling that $\|\cdot\|_{\mathfrak{E}} = \|\cdot\|_{L^1_m}$, we obtain
\[
\|\mathbb{C}[f] - \mathbb{C}[g]\|_{\mathfrak{E}} \le C_m \| f - g \|_{L^1_m}.
\]

Since $C|\p|^m - 2 \nu |\p|^{m+\gamma}$ is always bounded by $C'|\p|^m$ for some $C'>0$, it follows that
\[
\big[ \mathbb{Q}[f] - \mathbb{Q}[g], f - g \big] \le C_m \| f - g \|_{\mathfrak{E}}.
\]

Thus, condition $(\mathscr{C})$ is satisfied. This completes the proof of Theorem \ref{Theorem:Main}.

\section{Proof of Theorem \ref{Theorem:ODE}}\label{Appendix}
The proof is divided into four parts.

\textbf{Part 1:}  
By our assumptions, $\Omega_T$ is bounded by a constant $C_S$ in the norm $\|\cdot\|$, and due to the H\"older continuity of $\mathcal{Q}[u]$, we have
\[
\|\mathcal{Q}[u]\|\le C_\mathcal{Q}, \quad \forall u \in \Omega_T.
\]  
For an element $u \in \Omega_0 \subset \Omega_T$, there exists $\xi_u>0$ such that for $0<\xi<\xi_u$,
\[
B(u+\xi \mathcal{Q}[u],\delta)\cap \Omega_T \setminus \{u+\xi \mathcal{Q}[u]\} \neq \emptyset
\]
for $\delta$ sufficiently small.  

For a fixed $u$ and $\epsilon\in (0,1)$, there exists $\xi>0$ such that if $\|u-v\|\le (C_\mathcal{Q}+1)\xi$, then $\|\mathcal{Q}(u)-\mathcal{Q}(v)\|\le \epsilon/2$. Let $z \in B\big(u+\xi \mathcal{Q}[u], \frac{\epsilon \xi}{2}\big) \cap \Omega_T \setminus \{u+\xi \mathcal{Q}[u]\}$ satisfy
\[
\left|\frac{z-u}{\xi}\right|_* \le \frac{\theta_*}{2}\|u\|_*, 
\]
and define
\[
t \mapsto \Theta(t) = u + \frac{t(z-u)}{\xi}, \quad t \in [0, \xi].
\]

We also have the following upper bound on $\Theta$:
\[
\begin{aligned}
	\|\Theta(t)\|_* &= |\Theta(t)|_* = \left| u + \frac{t(z-u)}{\xi} \right|_* \\
	&\le |u|_* + \left| \frac{t(z-u)}{\xi} \right|_* \le |u|_* + |u|_* \frac{t\theta_*}{2} \\
	&= \|\Theta(0)\|_* \left(1 + \frac{t\theta_*}{2}\right),
\end{aligned}
\]
which implies
\begin{equation}\label{Theorem:ODE:E4}
	\|\Theta(t)\|_* \le (\|\Theta(0)\|_* + 1) e^{\theta_* t} - 1 < (2\varsigma + 1) e^{\theta_* t}.
\end{equation}

Thus, $\Theta$ maps $[0, \xi]$ into $\Omega_T$. It is straightforward to see that
\[
\|\Theta(t) - u\| \le \left\|\frac{t(z-u)}{\xi}\right\| \le \xi \|\mathcal{Q}[u]\| + \frac{\epsilon \xi}{2} < (C_\mathcal{Q}+1)\xi,
\]
which implies
\[
\|\mathcal{Q}[\Theta(t)] - \mathcal{Q}[u]\| \le \frac{\epsilon}{2}, \quad \forall t \in [0,\xi].
\]  
Combining this with
\[
\|\dot{\Theta}(t) - \mathcal{Q}[u]\| = \left\|\frac{z-u}{\xi} - \mathcal{Q}[u]\right\| \le \frac{\epsilon}{2},
\]
we obtain
\begin{equation}\label{Theorem:ODE:E1}
	\|\dot{\Theta}(t) - \mathcal{Q}[\Theta(t)]\| \le \epsilon, \quad \forall t \in [0, \xi].
\end{equation}

\textbf{Part 2:}  
Let $\Theta$ be a solution to \eqref{Theorem:ODE:E1} on $[0, \xi]$ constructed in Part 1. Using the same procedure, we can extend $\Theta$ to the interval $[\xi, \xi + \xi']$.  

The same arguments that led to \eqref{Theorem:ODE:E4} imply
\[
\|\Theta(\xi+t)\|_* \le (\|\Theta(\xi)\|_* + 1) e^{\theta_* t} - 1
\text{,}\quad t \in [0, \xi'].
\]  
Combining this with \eqref{Theorem:ODE:E4}, we have
\begin{equation}\label{Theorem:ODE:E5}
	\begin{aligned}
		\|\Theta(\xi+t)\|_* &\le \big( (\|\Theta(0)\|_* + 1) e^{\theta_* \xi} - 1 + 1 \big) e^{\theta_* t} - 1 \\
		&= (\|\Theta(0)\|_* + 1) e^{\theta_*(\xi+t)} - 1 \\
		&< (2\varsigma + 1) e^{\theta_*(\xi+t)},
	\end{aligned}
\end{equation}
where the last inequality follows from $\varsigma \ge 1$.

\textbf{Part 3:}  
From Part 1, there exists a solution $\Theta$ to \eqref{Theorem:ODE:E1} on an interval $[0,\xi]$. We proceed as follows:

\begin{itemize}
	\item \textit{Step 1:} Suppose we have constructed a solution $\Theta$ of \eqref{Theorem:ODE:E1} on $[0,\tau]$ with $\tau<T$, where $\Theta(0) \in \Omega_0 \cap B_*(O, \varsigma)$. By Part 2, $\Theta(\tau) \in \Omega_\tau$. 
	Using the same procedure as in Part 1 
	and applying \eqref{Theorem:ODE:E4} 
	and \eqref{Theorem:ODE:E5}, 
	the solution $\Theta$ can be extended to $[\tau, \tau+h_\tau]$ with $\tau+h_\tau \le T$.  
	
	\item \textit{Step 2:} Suppose we have constructed $\Theta$ on a sequence of intervals $[0, \tau_1]$, $[\tau_1, \tau_2]$, $\dots$, $[\tau_n, \tau_{n+1}]$, $\dots$. Since the increasing sequence $\{\tau_n\}$ is bounded by $T$, it converges to a limit, denoted $\tau$. Moreover, we have
	\begin{equation}\label{Theorem:ODE:E5b}
		\begin{aligned}
			\|\Theta(t)\|_* &\le (\|\Theta(0)\|_* + 1) e^{\theta_* t} - 1 
			< (2\varsigma+1)e^{\theta_* t}, & \forall t \in [0, \tau)
			\text{.}
		\end{aligned}
	\end{equation}
	
	Since $\|\mathcal{Q}(\Theta)\|$ is bounded by $C_\mathcal{Q}$ on each interval $[\tau_n, \tau_{n+1}]$, it follows that $\|\dot{\Theta}\|$ is bounded by $\epsilon + C_\mathcal{Q}$ on $[0, \tau)$. Therefore, $\Theta(\tau)$ can be defined as the limit of $\Theta(\tau_n)$ in the norm $\|\cdot\|$.  
	Together with \eqref{LesbegueDominated} and the fact that $\Omega_\tau$ is closed in $\|\cdot\|_*$, this implies that $\Theta$ is a solution of \eqref{Theorem:ODE:E1} on $[0, \tau]$, and \eqref{Theorem:ODE:E5b} holds on $[0, \tau]$ as well.
\end{itemize}

Consequently, if a solution $\Theta$ is defined on $[0, T_0)$ with $T_0 < T$, it can be extended to $[0, T_0]$. If $[0, T_0]$ is the maximal interval where $\Theta$ is defined (by Steps 1 and 2), then $\Theta$ can be further extended to $[T_0, T_0+T_h]$. This implies $T_0 = T$, so $\Theta$ is defined on the entire interval $[0, T]$.

\textbf{Part 4:}  
Finally, consider a sequence of solutions $\{u^\epsilon\}$ to \eqref{Theorem:ODE:E1} on $[0, T]$. We show that this sequence is Cauchy.  

Let $\{u^\epsilon\}$ and $\{v^\epsilon\}$ be two such sequences. Since $u^\epsilon$ and $v^\epsilon$ are affine on $[0, T]$, and by the one-sided Lipschitz condition, we have for a.e. $t \in [0, T]$,
\begin{equation*}
	\begin{aligned}
		\frac{d}{dt} \|u^\epsilon(t)-v^\epsilon(t)\|
		&= \big[\dot{u}^\epsilon(t)-\dot{v}^\epsilon(t), {u}^\epsilon(t) - {v}^\epsilon(t)\big] \\
		&\le \big[\mathcal{Q}[u^\epsilon(t)] - \mathcal{Q}[v^\epsilon(t)], u^\epsilon(t)-v^\epsilon(t)\big] + 2\epsilon \\
		&\le L \|u^\epsilon(t) - v^\epsilon(t)\| + 2\epsilon,
	\end{aligned}
\end{equation*}

which implies
\[
\|u^\epsilon(t) - v^\epsilon(t)\| \le \frac{2\epsilon}{L} e^{L T}.
\]

Letting $\epsilon \to 0$, we obtain $u^\epsilon \to u$ uniformly on $[0, T]$. It follows immediately that $u$ is a solution to \eqref{Theorem_ODE_Eq}.

\section{Conclusion}

We formulated a three--wave kinetic equation for stratified fluids in the ocean, incorporating a physically motivated resonance--broadening operator and collision kernel, and proceeded to prove global existence and uniqueness of strong solutions in \(L^1_m(\mathbb{R}^d)\). 

We considered nonlinear interactions between three wavenumbers, where two wavenumbers are much larger in magnitude than the third wavenumber. 
When oceanographers study the kinetic equation in this limit, they typically make the scale separation large enough so that the kinetic equation can be represented by a diffusion equation, in the induced diffusion limit \cite{mccomasbretherton77}.
In the current work, our approach is different.  We take into account the near resonant interactions without taking the extreme scale separation.  Thus, we anticipate that the current formulation of the kinetic equation may be more accurate than the traditional diffusion approximation of the kinetic equation.

The present work advances the near--resonant program by replacing an acoustic--oriented broadening approximation \cite{GambaSmithBinh} with one better suited to oceanographic settings \cite{polzin2017oceanic}, where Garrett--Munk--type phenomenology emerges \cite{garrett1975space,garrett1979internal}.

\bibliographystyle{plain}
\bibliography{QuantumBoltzmann}

\def\cprime{$'$} \def\cprime{$'$} \def\cprime{$'$} \def\cprime{$'$} \def\cprime{$'$} \def\cprime{$'$} \def\cprime{$'$} \def\cprime{$'$}
\begin{thebibliography}{10}

\bibitem{AlonsoGambaBinh}
R.~Alonso, I.~M. Gamba, and M.-B. Tran.
\newblock The {C}auchy problem and {BEC} stability for the quantum {B}oltzmann-{G}ross-{P}itaevskii system for bosons at very low temperature.
\newblock {\em arXiv preprint arXiv:1609.07467}, 2016.

\bibitem{BadialeSerra:SEE:2011}
M.~Badiale and E.~Serra.
\newblock {\em Semilinear elliptic equations for beginners}.
\newblock Universitext. Springer, London, 2011.
\newblock Existence results via the variational approach.

\bibitem{Bressan}
A.~Bressan.
\newblock Notes on the {B}oltzmann equation.
\newblock {\em Lecture notes for a summer course, S.I.S.S.A. Trieste}, 2005.

\bibitem{cai1999spectral}
D.~Cai, A.~J. Majda, D.~W. McLaughlin, and E.~G. Tabak.
\newblock Spectral bifurcations in dispersive wave turbulence.
\newblock {\em Proceedings of the National Academy of Sciences}, 96(25):14216--14221, 1999.

\bibitem{cairns1976internal}
J.~L. Cairns and G.~O. Williams.
\newblock Internal wave observations from a midwater float, 2.
\newblock {\em Journal of Geophysical Research}, 81(12):1943--1950, 1976.

\bibitem{chekhlov96}
A.~Chekhlov, S.~A. Orszag, S.~Sukoriansky, B.~Galperin, and I.~Staroselsky.
\newblock The effect of small-scale forcing on large-scale structures in two-dimensional flows.
\newblock {\em Physica D}, 98:321--334, 1996.

\bibitem{connaughton2001discreteness}
C.~Connaughton, S.~Nazarenko, and A.~Pushkarev.
\newblock Discreteness and quasiresonances in weak turbulence of capillary waves.
\newblock {\em Physical Review E}, 63(4):046306, 2001.

\bibitem{cortes2020system}
E.~Cort{\'e}s and M.~Escobedo.
\newblock On a system of equations for the normal fluid-condensate interaction in a bose gas.
\newblock {\em Journal of Functional Analysis}, 278(2):108315, 2020.

\bibitem{CraciunBinh}
G.~Craciun and M.-B. Tran.
\newblock A reaction network approach to the convergence to equilibrium of quantum {B}oltzmann equations for {B}ose gases.
\newblock {\em ESAIM: Control, Optimisation and Calculus of Variations}, 2021.

\bibitem{das2024numerical}
Arijit Das and Minh-Binh Tran.
\newblock Numerical schemes for a fully nonlinear coagulation--fragmentation model coming from wave kinetic theory.
\newblock {\em Proceedings of the Royal Society A}, 481(2316):20250197, 2025.

\bibitem{escobedo2023linearized1}
M.~Escobedo.
\newblock On the linearized system of equations for the condensate--normal fluid interaction at very low temperature.
\newblock {\em Studies in Applied Mathematics}, 150(2):448--456, 2023.

\bibitem{escobedo2023linearized}
M.~Escobedo.
\newblock On the linearized system of equations for the condensate-normal fluid interaction near the critical temperature.
\newblock {\em Archive for Rational Mechanics and Analysis}, 247(5):92, 2023.

\bibitem{escobedo2025local}
M.~Escobedo.
\newblock Local classical solutions of a kinetic equation for three waves interactions in presence of a dirac measure at the origin.
\newblock {\em arXiv preprint arXiv:2505.00267}, 2025.

\bibitem{EPV}
M.~Escobedo, F.~Pezzotti, and M.~Valle.
\newblock Analytical approach to relaxation dynamics of condensed {B}ose gases.
\newblock {\em Ann. Physics}, 326(4):808--827, 2011.

\bibitem{EscobedoBinh}
M.~Escobedo and M.-B. Tran.
\newblock Convergence to equilibrium of a linearized quantum {B}oltzmann equation for bosons at very low temperature.
\newblock {\em Kinetic and Related Models}, 8(3):493--531, 2015.

\bibitem{GambaSmithBinh}
I.~M. Gamba, L.~M. Smith, and M.-B. Tran.
\newblock On the wave turbulence theory for stratified flows in the ocean.
\newblock {\em M3AS: Mathematical Models and Methods in Applied Sciences. Vol. 30, No. 1 105-137}, 2020.

\bibitem{garrett1975space}
C.~Garrett and W.~Munk.
\newblock Space-time scales of internal waves: A progress report.
\newblock {\em Journal of Geophysical Research}, 80(3):291--297, 1975.

\bibitem{garrett1979internal}
C.~Garrett and W.~Munk.
\newblock Internal waves in the ocean.
\newblock {\em Annual Review of Fluid Mechanics}, 11(1):339--369, 1979.

\bibitem{huang2000}
H.-P. Huang, B.~Galperin, and S.~Sukoriansky.
\newblock Anisotropic spectra in two-dimensional turbulence on the surface of a sphere.
\newblock {\em Phys. Fluids}, 13:225--240, 2000.

\bibitem{lee2007formation}
Y.~Lee and L.~M. Smith.
\newblock On the formation of geophysical and planetary zonal flows by near-resonant wave interactions.
\newblock {\em J. Fluid Mech.}, 576:405--424, 2007.

\bibitem{l1997statistical}
V.~S. Lvov, Y.~Lvov, A.~C. Newell, and V.~Zakharov.
\newblock Statistical description of acoustic turbulence.
\newblock {\em Physical Review E}, 56(1):390, 1997.

\bibitem{lvov2004hamiltonian}
Y.~Lvov and E.~G. Tabak.
\newblock A {H}amiltonian formulation for long internal waves.
\newblock {\em Physica D: Nonlinear Phenomena}, 195(1):106--122, 2004.

\bibitem{lvov2004noisy}
Y.~V. Lvov and S.~Nazarenko.
\newblock Noisy spectra, long correlations, and intermittency in wave turbulence.
\newblock {\em Physical Review E}, 69(6):066608, 2004.

\bibitem{lvov2010oceanic}
Y.~V. Lvov, K.~L. Polzin, E.~G. Tabak, and N.~Yokoyama.
\newblock Oceanic internal-wave field: theory of scale-invariant spectra.
\newblock {\em Journal of Physical Oceanography}, 40(12):2605--2623, 2010.

\bibitem{lvov2012resonant}
Y.~V. Lvov, K.~L. Polzin, and N.~Yokoyama.
\newblock Resonant and near-resonant internal wave interactions.
\newblock {\em Journal of Physical Oceanography}, 42(5):669--691, 2012.

\bibitem{majda1997one}
A.~J. Majda, D.~W. McLaughlin, and E.~G. Tabak.
\newblock A one-dimensional model for dispersive wave turbulence.
\newblock {\em Journal of Nonlinear Science}, 7(1):9--44, 1997.

\bibitem{mccomasbretherton77}
C.~H. McComas and F.~P. Bretherton.
\newblock Resonant interaction of oceanic internal waves.
\newblock {\em Journal of Geophysical Research}, 82(9):1397--1412, 1977.

\bibitem{Nazarenko:2011:WT}
S.~Nazarenko.
\newblock {\em Wave turbulence}, volume 825 of {\em Lecture Notes in Physics}.
\newblock Springer, Heidelberg, 2011.

\bibitem{newell69}
A.~Newell.
\newblock Rossby wave packet interactions.
\newblock {\em J. Fluid Mech.}, 35:255--271, 1969.

\bibitem{nguyen2017quantum}
T.~T. Nguyen and M.-B. Tran.
\newblock On the {K}inetic {E}quation in {Z}akharov's {W}ave {T}urbulence {T}heory for {C}apillary {W}aves.
\newblock {\em SIAM J. Math. Anal.}, 50(2):2020--2047, 2018.

\bibitem{ToanBinh}
Toan~T Nguyen and Minh-Binh Tran.
\newblock Uniform in time lower bound for solutions to a quantum boltzmann equation of bosons.
\newblock {\em Archive for Rational Mechanics and Analysis}, 231(1):63--89, 2019.

\bibitem{polzin2017oceanic}
K.~L Polzin and Y.~V. Lvov.
\newblock An oceanic ultra-violet catastrophe, wave-particle duality and a strongly nonlinear concept for geophysical turbulence.
\newblock {\em Fluids}, 2(3):36, 2017.

\bibitem{PomeauBinh}
Y.~Pomeau and M.-B. Tran.
\newblock Statistical physics of non equilibrium quantum phenomena.
\newblock {\em Lecture Notes in Physics, Springer}, 2019.

\bibitem{remmel2009new}
M.~Remmel and L.~M. Smith.
\newblock New intermediate models for rotating shallow water and an investigation of the preference for anticyclones.
\newblock {\em J. Fluid Mech.}, 635:321--359, 2009.

\bibitem{remmel2010nonlinear}
M.~Remmel, J.~Sukhatme, and L.~M. Smith.
\newblock Nonlinear inertia-gravity wave-mode interactions in three dimensional rotating stratified flows.
\newblock {\em Communications in Mathematical Sciences}, 8(2):357--376, 2010.

\bibitem{remmel2014nonlinear}
M.~Remmel, J.~Sukhatme, and L.~M. Smith.
\newblock Nonlinear gravity-wave interactions in stratified turbulence.
\newblock {\em Theoretical and Computational Fluid Dynamics}, 28(2):131, 2014.

\bibitem{rumpf2021wave}
B.~Rumpf, A.~Soffer, and M.-B. Tran.
\newblock On the wave turbulence theory: ergodicity for the elastic beam wave equation.
\newblock {\em Mathematische Zeitschrift}, 310(2):1--41, 2025.

\bibitem{smith2001}
L.~M. Smith.
\newblock Numerical study of two-dimensional stratified turbulence.
\newblock {\em Contemporary Mathematics: Advances in Wave Interaction and Turbulence}, pages 91--106, 2001.

\bibitem{smith2005near}
L.~M. Smith and Y.~Lee.
\newblock On near resonances and symmetry breaking in forced rotating flows at moderate rossby number.
\newblock {\em J. Fluid Mech.}, 535:111--142, 2005.

\bibitem{smith1999transfer}
L.~M. Smith and F.~Waleffe.
\newblock Transfer of energy to two-dimensional large scales in forced, rotating three-dimensional turbulence.
\newblock {\em Physics of Fluids}, 11(6):1608--1622, 1999.

\bibitem{smith2002generation}
L.~M. Smith and F.~Waleffe.
\newblock Generation of slow large scales in forced rotating stratified turbulence.
\newblock {\em J. Fluid Mech.}, 451:145--168, 2002.

\bibitem{SofferBinh1}
A.~Soffer and M.-B. Tran.
\newblock On the dynamics of finite temperature trapped bose gases.
\newblock {\em Advances in Mathematics}, 325:533--607, 2018.

\bibitem{soffer2020energy}
A.~Soffer and M.-B. Tran.
\newblock On the energy cascade of 3-wave kinetic equations: beyond kolmogorov--zakharov solutions.
\newblock {\em Communications in Mathematical Physics}, 376(3):2229--2276, 2020.

\bibitem{staffilani2025formation}
G.~Staffilani and M.-B. Tran.
\newblock Formation of condensations for non-radial solutions to 3-wave kinetic equations.
\newblock {\em arXiv preprint arXiv:2503.17066}, 2025.

\bibitem{tran2020reaction}
M.-B. Tran, G.~Craciun, L.~M. Smith, and S.~Boldyrev.
\newblock A reaction network approach to the theory of acoustic wave turbulence.
\newblock {\em Journal of Differential Equations}, 269(5):4332--4352, 2020.

\bibitem{waleffe1992nature}
F.~Waleffe.
\newblock The nature of triad interactions in homogeneous turbulence.
\newblock {\em Physics of Fluids A: Fluid Dynamics}, 4(2):350--363, 1992.

\bibitem{walton2023numerical}
S.~Walton and M.-B. Tran.
\newblock A numerical scheme for wave turbulence: 3-wave kinetic equations.
\newblock {\em SIAM Journal on Scientific Computing}, 45(4):B467--B492, 2023.

\bibitem{walton2022deep}
S.~Walton, M.-B. Tran, and A.~Bensoussan.
\newblock A deep learning approximation of non-stationary solutions to wave kinetic equations.
\newblock {\em Applied Numerical Mathematics}, 2022.

\bibitem{walton2024numerical}
Steven Walton and Minh-Binh Tran.
\newblock Numerical schemes for 3-wave kinetic equations: A complete treatment of the collision operator.
\newblock {\em Journal of Computational Physics}, page 114147, 2025.

\bibitem{zakharov1968stability}
V.~E. Zakharov.
\newblock Stability of periodic waves of finite amplitude on the surface of a deep fluid.
\newblock {\em Journal of Applied Mechanics and Technical Physics}, 9(2):190--194, 1968.

\bibitem{zakharov1967weak}
V.~E. Zakharov and N.~N. Filonenko.
\newblock Weak turbulence of capillary waves.
\newblock {\em Journal of applied mechanics and technical physics}, 8(5):37--40, 1967.

\bibitem{zakharov2012kolmogorov}
V.~E. Zakharov, V.~S. L'vov, and G.~Falkovich.
\newblock {\em Kolmogorov spectra of turbulence I: Wave turbulence}.
\newblock Springer Science \& Business Media, 2012.

\end{thebibliography}

\end{document}